\newtheorem{lemma}{Lemma}
\newtheorem{claim}{Claim}
\newtheorem{observation}{Observation}
\newtheorem{theorem}{Theorem}
\newtheorem{definition}{Definition}
\providecommand{\email}[1]{\href{mailto:#1}{\nolinkurl{#1}\xspace}}
\definecolor{mahogany}{rgb}{0.75, 0.25, 0.0}
\title{A Polynomial-Time Approximation for Pairwise Fair $k$-Median Clustering} 
 \author{Sayan Bandyapadhyay\thanks{Portland State University. Email: \email{sayanb@pdx.edu }. Supported by a grant (2311397) from the National Science Foundation.} \and Eden Chlamt\'a\v{c}\thanks{Ben Gurion University. Email: \email{chlamtac@cs.bgu.ac.il}. The work was done while the author was visiting and supported by TTIC.}
 \and Zachary Friggstad\thanks{University of Alberta. Email: \email{zacharyf@ualberta.ca}. Supported by an NSERC Discovery Grant.} 
 \and Mahya Jamshidian\thanks{University of Alberta. Email: \email{mjamshidian@ualberta.ca}}
 \and Yury Makarychev\thanks{Toyota Technological Institute at Chicago (TTIC). Email: \email{yury@ttic.edu}. Supported by NSF Awards CCF-1955173, CCF-1934843, and ECCS-2216899.}
 \and Ali Vakilian\thanks{Toyota Technological Institute at Chicago (TTIC). Email: \email{vakilian@ttic.edu}}}
\date{} 
\theoremstyle{definition}
\newcommand{\frc}{\textsc{Fair Representational Clustering}\xspace}
\newcommand{\pfc}{\textsc{Pairwise Fair Clustering}\xspace}
\newcommand{\pfkmd}{\textsc{Pairwise Fair $k$-Median}\xspace}
\begin{document}

\maketitle

\begin{abstract}
In this work, we study pairwise fair clustering with $\ell \ge 2$ groups, where for every cluster $C$ and every group $i \in [\ell]$, the number of points in $C$ from group $i$  must be at most $t$ times the number of points in $C$ from any other group $j \in [\ell]$, for a given integer $t$. 
To the best of our knowledge, only bi-criteria approximation and exponential-time algorithms follow for this problem from the prior work on fair clustering problems when $\ell > 2$. In our work, focusing on the $\ell > 2$ case, we design the first polynomial-time $O(k^2\cdot \ell \cdot t)$-approximation for this problem with $k$-median cost that does not violate the fairness constraints. We complement our algorithmic result by providing hardness of approximation results, which show that our problem even when $\ell=2$ is almost as hard as the popular uniform capacitated $k$-median, for which no polynomial-time algorithm with an approximation factor of $o(\log k)$ is known. 
\end{abstract}

\section{Introduction}\label{sec:intro} 
Clustering is a fundamental task in theoretical computer science and machine learning aimed at dividing a set of data items into several groups or clusters, such that each group contains similar data items. Typically, the similarity between data items is measured using a metric distance function. Clustering is often modeled as an optimization problem where the objective is to minimize a global cost function that reflects the quality of the clusters; this function varies depending on the application. Among the many cost functions studied for clustering, the most popular are $k$-median, $k$-means, and $k$-center. These objectives generally aim to minimize the variance within the clusters, serving as a proxy for grouping similar data items 

In this work, we study clustering problems with fairness constraints, commonly known as fair clustering problems. Fair clustering emerged as one of the most active research areas in algorithms motivated by the recent trend of research on fairness in artificial intelligence. In a seminal work, Chierichetti et al.~\cite{chierichetti2017fair} introduced a fair clustering problem, where given a set $R$ of red points, a set $B$ of blue points, and an integer balance parameter $t \ge 1$, a clustering is said to be \textit{balanced} if, in every cluster, the number of red points is at least $1/t$ times the number of blue points and at most $t$ times the number of blue points. Then, the goal is to compute a balanced clustering so that a certain cost function is minimized (e.g., $k$-median). In their work, they obtained a 4-approximation for balanced $k$-center and an $O(t)$-approximation for balanced $k$-median. Both algorithms run in polynomial time. Subsequently, Schmidt~{et al.}~\cite{schmidt2019fair} gave an $n^{O(k/\epsilon)}$ time $(1+\epsilon)$-approximation for the Euclidean balanced $k$-means and $k$-median. In the same Euclidean case, Backurs et al.~\cite{backurs2019scalable} designed a near-linear
time $O(d\cdot\log n)$-approximation, where $d$ is the dimension of the input space. 

Since the work of \cite{chierichetti2017fair}, many researchers have generalized the notion of balance to multiple colors (or groups), where the input point set $P$ is partitioned into $\ell\ge 2$ groups $P_1,\ldots, P_{\ell}$. R{\"o}sner et al.~\cite{rosner2018privacy} studied the problem where a clustering is called fair if, for every cluster $C$ and every group $1\le i\le \ell$, the ratio $\frac{|P_i\cap C|}{|C|}$ is equal to the ratio $\frac{|P_i|}{|P|}$. They obtained a polynomial-time $O(1)$-approximation for this problem with $k$-center cost. B{\"o}hm et al.~\cite{bohm2020fair} considered a restricted case of this problem where $|P_i|=|P|/\ell$ for all $i$, i.e., every cluster must contain the same number of points from each group. They gave a polynomial-time $O(1)$-approximation for this problem with $k$-median cost. Later, Bercea et al.~\cite{bercea2019cost} and Bera et al.~\cite{bera2019fair} independently defined the following generalization of balanced clustering with multiple groups.
\begin{definition}[\frc]
We are given $\ell$ disjoint groups $P_1,\ldots, P_{\ell}$ containing a total of $n$ points within a metric space $(\Omega,d)$. Each group $P_i$, for $1 \leq i \leq \ell$, is associated with balance parameters $\alpha_i, \beta_i \in [0,1]$. A clustering is called \emph{fair representational} if, in each cluster, the fraction of points from group $i$ is at least $\alpha_i$ and at most $\beta_i$, for all $1 \leq i \leq \ell$. Then, the goal is to compute a fair representational clustering that minimizes a given cost function. 
\end{definition}

Bercea et al.~\cite{bercea2019cost} designed polynomial-time $O(1)$-approximations for fair representational $k$-median and $k$-means, albeit with a bi-criteria kind of guarantee where the solution clustering may violate the fairness constraints by an additive $\pm 1$ term. Independently, Bera et al.~\cite{bera2019fair} obtained similar results, for a more general $\ell_p$ cost, with an additive $\pm 3$ violation. 
Schmidt et al.~\cite{schmidt2019fair}, studied composable coreset construction for these problems and designed streaming algorithms for fair representational clustering. 
Subsequently, Dai et al.~\cite{dai2022fair} proposed an $O(\log k)$-approximation with $k$-median cost that runs in time $n^{O(\ell)}$. Concurrently to~\cite{bercea2019cost,bera2019fair}, Ahmadian et al.~\cite{ahmadian2019clustering} studied the fair representational $k$-center with only the upper bound requirement (i.e., $\alpha_i=0$, $\forall i\in [\ell]$). They devised a bi-criteria approximation with an additive $\pm 2$ violation. 

Next, we define another natural generalization of balanced clustering \cite{chierichetti2017fair} considered in \cite{abs-2007-10137}, which is the main focus of our work. 
\begin{definition}[\pfc]
We are given $\ell$ disjoint groups $P_1,\ldots, P_{\ell}$ containing a total of $n$ points within a metric space $(\Omega,d)$, and an integer balance parameter $t \ge 1$. A clustering is called \emph{pairwise fair} if, for any cluster $C$ and any pair of groups $1\le i,j\le \ell$, the number of points from group $i$ in $C$ is at least $1/t$ times the number of points from group $j$ in $C$ and at most $t$ times the number of points from group $j$ in $C$. Then, the goal is to compute a pairwise fair clustering that minimizes a given cost function.
\end{definition}

Note that one of the main differences between \frc and \pfc is that in the former, the balance parameters can be different for different groups, whereas, in the latter, we have a uniform parameter $t$ for all groups (see Appendix for a demonstration). Intuitively, this might give the impression that \pfc is computationally easier than \frc. Indeed, this intuition is correct when the number of groups $\ell$ is exactly 2. {In this case, \pfc is a restricted version of \frc.} To see this, set $\alpha_1 = \alpha_2 = \frac{1}{t+1}$ and $\beta_1 = \beta_2 = \frac{t}{t+1}$.

\begin{figure}[!t]
\centering
\includegraphics[width=13cm]{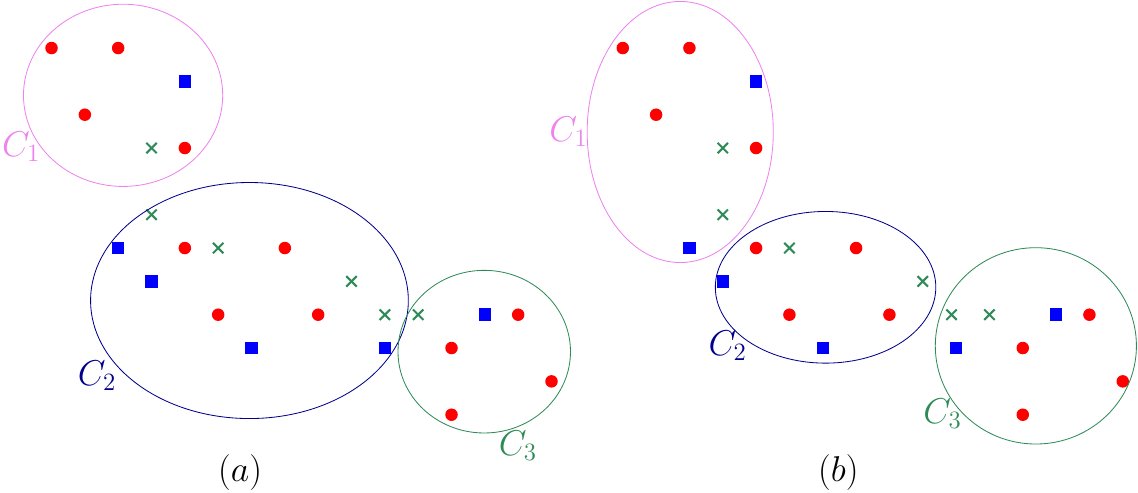}
\caption{\small (a) A fair representational clustering for three groups red (r, disk), blue (b, square), and green (g, cross). Here $\alpha_r=1/3,\beta_r=2/3,\alpha_b=1/6,\beta_b=1/3,\alpha_g=1/6,\beta_g=1/3$. $C_1,C_2,C_3$ are the clusters. (b) A pairwise fair clustering with $t=2$ for the same dataset. The clustering in (a) is not pairwise fair for $t=2$, as $C_1$ contains 4 red (disk) points, but only 1 blue (square).}
\label{fig:1.5}
\end{figure}

{Another special case where the two problems become equivalent is when the clusters are required to be fully balanced}, i.e., \frc with $\alpha_i=\beta_i=\frac{1}{\ell}$ $\forall i$ and \pfc with $t=1$. 
However, for $\ell \ge 3$ and arbitrary parameters, it is not clear how to establish such direct connections between the two problems.

Bandyapadhyay et al.~\cite{abs-2007-10137} considered the fixed-parameter tractability (FPT) of a collection of fair clustering problems. They obtained $O(1)$-approximations for \frc and \pfc\footnote{They study \pfc by the name of $(t,k)$-fair clustering.} (with $k$-median and $k$-means objectives) that run in time $(k\ell)^{O(k\ell)}\cdot n^{O(1)}$ and $(k\ell)^{O(k)}\cdot n^{O(1)}$, respectively.  
We note that, using the apparent connection between the notions of fairness in \frc and \pfc, the bi-criteria approximations of \cite{bera2019fair,bercea2019cost} can be suitably adapted for the latter problem.   

It is evident from the above discussion that despite the extensive research on fair $k$-median with multiple groups, the known approximation algorithms are either bi-criteria that violate fairness constraints or have super-polynomial running time. Even when $t = O(1)$, no polynomial-time approximation algorithm is known for \pfc.  

\paragraph{Our Contributions.} In this work, we consider \pfc with $k$-median cost (\pfkmd). For a pairwise fair clustering $\{X_1,\ldots,X_k\}$ with cluster centers $\{c_1,\ldots,c_k\}\subset \Omega$, the $k$-median cost is $\sum_{j=1}^k \sum_{p\in X_j} d(p,c_j)$. The goal is to find a pairwise fair clustering along with a set of centers that minimizes the cost. We design a true polynomial-time approximation for this problem, i.e. the solutions do not violate the fairness constraints at all.  

\begin{theorem}\label{thm:algopfc}
    There is a polynomial-time $O(k^2\cdot \ell \cdot t)$-approximation algorithm for \pfkmd.
\end{theorem}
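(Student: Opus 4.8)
The plan is to decouple the choice of centers from the pairwise-fair assignment of points to them, spending essentially all of the approximation budget on the assignment.

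\textbf{Stage 1: reduce to a balanced assignment instance.} Run any $O(1)$-approximation for (unconstrained) $k$-median to obtain a center set $C=\{c_1,\dots,c_k\}$; since an optimal pairwise-fair clustering is in particular a feasible $k$-clustering, the unconstrained optimum is at most $\mathrm{OPT}$, so $\sum_p\min_j d(p,c_j)=O(\mathrm{OPT})$. I first claim the cheapest pairwise-fair assignment of all points to the fixed set $C$ already costs $O(\mathrm{OPT})$. Take an optimal pairwise-fair clustering $\{O_1,\dots,O_k\}$ with centers $\{o_1,\dots,o_k\}$; for each $m$, averaging $d(o_m,c_{j(p)})\le d(o_m,p)+\min_j d(p,c_j)$ (with $c_{j(p)}$ the center nearest $p$) over $p\in O_m$ yields a center $c_{j(m)}\in C$ with $|O_m|\cdot d(o_m,c_{j(m)})\le \mathrm{cost}(O_m)+\sum_{p\in O_m}\min_j d(p,c_j)$. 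Move all of $O_m$ onto $c_{j(m)}$ and merge clusters that land on the same center. A union of pairwise-fair sets sharing a center is again pairwise-fair --- the per-group counts simply add, and a sum of numbers each within a factor $t$ of one another stays within a factor $t$ --- and by the triangle inequality the resulting cost is $O(\mathrm{OPT})$. (Empty clusters are vacuously pairwise-fair, so fewer than $k$ centers may be used, which is what lets the merging go through; one also checks the necessary feasibility condition $\max_i|P_i|\le t\min_i|P_i|$, without which no pairwise-fair clustering exists.)

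\textbf{Stage 2: approximate the balanced transportation problem.} It remains to approximate, within $O(k^2\ell t)$, the \btp-type problem of assigning points to the fixed centers $C$ so that in every cluster all $\ell$ group-counts lie within a factor $t$ of one another. The constraints $|P_i\cap X_j|\le t\,|P_{i'}\cap X_j|$ are linear in the assignment variables $x_{pj}\in[0,1]$, so the natural LP relaxation is polynomial-time solvable and, by Stage 1, has value $O(\mathrm{OPT})$; the work is to round it to a genuinely feasible integral assignment. The obstruction is that generic transportation-style rounding controls each (cluster, group) count only up to an additive $O(1)$, which can destroy a tight ratio (e.g.\ counts $1$ versus $2$ when $t=1$). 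A natural remedy: (i) identify the ``light'' clusters --- those of small total fractional mass, which is exactly where ratios are fragile --- using that the constraint matrix would be totally unimodular but for the $O(\ell^2k)$ ratio constraints, so most mass already sits in heavy, essentially-integral clusters; (ii) reroute the mass of each light cluster into a neighbouring cluster (this preserves pairwise-fairness by the ``sum stays within a factor $t$'' property), then round the heavy clusters so that every ratio holds exactly.

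\textbf{Cost accounting.} Rerouting one cluster's mass into another costs at most that cluster's size times the distance between the two centers; bounding the relevant inter-center distances via a telescoping chain through the $\le k$ centers costs a factor $O(k)$ per cluster, and there are $O(k)$ clusters to repair --- the source of the $k^2$ --- while the $\ell$ groups and the factor-$t$ slack in the definition of pairwise fairness each contribute their own factor, giving $O(k^2\ell t)$ relative to the balanced-transportation optimum, hence $O(k^2\ell t)\cdot\mathrm{OPT}$ overall.

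\textbf{Main obstacle.} The hard part is Stage 2: turning a fractional pairwise-fair assignment into an integral one with \emph{zero} constraint violation. As the companion hardness result indicates, even this fixed-center subproblem is essentially as hard as uniform capacitated $k$-median, so a super-constant loss is unavoidable here; making the light-cluster rerouting argument preserve \emph{all} pairwise ratios exactly while charging its cost against the LP optimum is the delicate technical step, and is what forces polynomial (rather than constant) dependence on $k$.
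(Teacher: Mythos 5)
Your Stage~1 is essentially the paper's Lemma~\ref{lem:approxfair}: both reduce to finding a pairwise-fair assignment to the centers of a vanilla $k$-median solution, and your averaging argument (together with the observation that merging $t$-balanced clusters preserves $t$-balance) is a valid variant of the paper's ``map each optimal center to its nearest computed center'' argument. The gap is entirely in Stage~2, and it is a real one. First, the light/heavy-cluster dichotomy misdiagnoses where the infeasibility comes from. After solving the fixed-center LP, the quantity that matters at center $i$ is $\ell_i=\min_a\sum_{j\in P_a}x^*_{j,i}$, and any transportation-style rounding with per-(center,group) bounds $[\lfloor \ell_i\rfloor,\lceil t\ell_i\rceil]$ can leave one group at exactly $\lfloor \ell_i\rfloor$ and another at up to $\lceil t\ell_i\rceil>t\lfloor\ell_i\rfloor$ --- an additive violation of up to $t$ --- at \emph{every} center whose $\ell_i$ is fractional, regardless of how heavy that center is. So rerouting the mass of light clusters into neighbours does not remove the obstruction, and the claim that one can then ``round the heavy clusters so that every ratio holds exactly'' is precisely the statement that needs a proof; you supply no mechanism for it, and you acknowledge as much. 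The paper's mechanism is concrete: unassign the at most $t$ excess clients per (center, group) (at most $k\ell t$ in total), then reassign them by a procedure that designates one sink center $i^*$, shows a client can always be placed somewhere fair or else $\ell'_{i^*}$ can be raised by importing at most $\ell$ clients (feasibility of these imports needs an argument using $|P_a|\le t|P_b|$ and $t\ge 2$), and proves the raise happens at most $k$ times --- giving at most $k\ell t+k\ell$ relocated clients. None of this counting is present in your proposal; your $O(k^2\ell t)$ is asserted rather than derived.

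Second, your cost accounting per relocation is also unsupported. The paper guesses the maximum assignment distance $D$, adds the constraint $x_{j,i}=0$ whenever $d(i,j)>D$, and works separately on each connected component of the LP support graph; a minimum-hop path argument then bounds the distance between any client and any center in its component by $|F\cap C|\cdot D\le k\cdot O(OPT)$, which is what makes each relocation cost $O(k\cdot OPT)$. Your ``telescoping chain through the $\le k$ centers'' gestures at this, but without the truncated LP and the component decomposition there is no guarantee that consecutive centers in such a chain are within $O(OPT)$ of each other, so a single relocation could be arbitrarily expensive relative to $OPT$. In short: the reduction to a fixed-center assignment problem and the identification of the rounding difficulty are right, but the two load-bearing ingredients --- a correct repair procedure with a bound on the number of relocated clients, and a per-relocation distance bound via the $D$-truncated LP and its support-graph components --- are missing.
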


As with many previous algorithms for fair clustering models, our algorithm begins by using a constant-factor approximation algorithm for standard \textsc{$k$-Median} to select the set of $k$ centers $F$. We show there is an $O(1)$-approximation for the \pfkmd instance that uses centers $F$, so the focus is then to find a fair assignment to these centers. We use a natural assignment linear programming (LP) relaxation for this task, the main purpose is to find initial (perhaps fractional) values indicating how many clients from each group $P_i$ should be assigned to each center. The LP also serves to help us decompose the problem into instances where the distance between any two points is $O(k)$ times the optimum solution cost.

We then find an integer assignment of clients that almost matches each of these initial values, with only slight additive violation in the fairness constraint at each center. So far, our assignment is still within a constant-factor of the optimum solution cost. Our final step is to show that we can move some points from their currently-assigned to different centers. The number of points moved is a polynomial function of $\ell, k$ and $t$, hence our approximation guarantee.

While the final theoretical approximation guarantee is independent of the number of points, it still may seem a bit large. One might expect in practice that the number of points to be clustered is considerably larger than the parameters $\ell, k$, and $t$. In such cases, we would expect the cost increase for reassigning this bounded number of points in the final step to be small compared to the initial \textsc{$k$-median} solution. So we expect our algorithm to perform much better than this worst-case guarantee in practice.

To investigate this notion, we implemented and tested our algorithm on a variety of datasets to not only test if the algorithm performed better in practice, but also to check the ``cost of fairness'' using our approach: namely how much more expensive is our solution compared with unfair clustering solution at the start of our algorithm. Some of these results appear in Section~\ref{sec:experiments}.

\medskip
We also provide two hardness of approximation results in Section \ref{sec:hardness}. First, we show that \pfkmd is as hard as the popular capacitated $k$-median, up to a factor of $(1+\epsilon)$ for any $\epsilon > 0$. Despite extensive research on this problem, the best-known polynomial-time approximation factor remains to be $O(\log k)$ \cite{DBLP:conf/esa/AdamczykBMM019}. Our polynomial-time reduction demonstrates that if \pfkmd admits a $o(\log k)$-approximation, so does capacitated $k$-median.

Second, we study the complexity of the variant of \pfkmd when the groups $P_1,\ldots,P_{\ell}$ are not disjoint, i.e., a point can be included in multiple groups. Note that the bi-criteria approximation of Bera et al.~\cite{bera2019fair} for \frc also works for this variant, albeit with a larger additive violation. We prove that \pfkmd with overlapping groups is NP-hard to approximate within a factor of $n^{1-\varepsilon}$ for every $\varepsilon > 0$, even when $k=2$. 
We remark this is the first hardness result (with disjoint groups) holds also for $\frc$ with $k$-median cost.

\subparagraph*{Other Related Work.} The Euclidean version of fair $k$-means and $k$-median have also received attention, where points are from $\mathbb{R}^{d}$. Schmidt~{et al.}~\cite{schmidt2019fair} gave an $n^{O(k/\epsilon)}$ time $(1+\epsilon)$-approximation for balanced clustering with two groups. B\"ohm~{et al.}~\cite{bohm2020fair} obtained an $n^{\text{f}(k/\epsilon)}$ time $(1+\epsilon)$-approximation for \pfc in a restricted setting, where $f$ is a polynomial function. Lastly, Backurs et al.~\cite{backurs2019scalable} designed a near-linear time $O(d\cdot  \log n)$-approximation when points are in $\mathbb{R}^d$. 

The clustering problem has been explored with various other fairness notions, including fair center representation~\cite{krishnaswamy2011matroid,chen2016matroid,krishnaswamy2018constant,kleindessner2019fair,chiplunkar2020solve,jones2020fair,hotegni2023approximation}, proportional fairness~\cite{chen2019proportionally,micha2020proportionally}, min-max fairness~\cite{abbasi2020fair,GhadiriSV21,makarychev2021approximation,chlamtavc2022approximating,ghadiri2022constant,gupta2022lp}, fair colorful~\cite{bandyapadhyay2019constant,jia2020fair,anegg2020technique}, and individual fairness notions~\cite{jung2019center,mahabadi2020individual,negahbani2021better,vakilian2022improved,brubach2020pairwise,brubach2021fairness,ahmadi2022individual,aamand2023constant,mosenzon2024scalable}.

\section{The Approximation Algorithm}
\label{sec:approx}



In this section, we prove Theorem \ref{thm:algopfc} by designing an algorithm for \pfkmd. The steps will be described in the following subsections and the entire algorithm will be summarized at the end of this section. We assume $t \geq 2$ as the case $t = 1$ has already been studied and has a constant-factor approximation~\cite{bohm2020fair}.

We begin by introducing some notation. 

Let $[\ell] = \{1, 2, \ldots, \ell\}$ and $P = \cup_{a \in [\ell]} P_a$. For any assignment $\sigma: P \rightarrow F$ for a subset $F \subset \Omega$, its cost, denoted by $cost(\sigma)$, is $\sum_{v \in P} d(v,\sigma(v))$. Moreover, $\sigma$ is called \textit{fair} if for any $c \in F$ and $a,b \in [\ell]$, $|\sigma^{-1}(c)\cap P_a| \le t\cdot |\sigma^{-1}(c)\cap P_b|$. Let $OPT$ be the optimal cost for the \pfkmd instance, $F^*=\{c_1^*,\ldots,c_k^*\}$ be a corresponding optimal set of centers, and $\{C_1^*,\ldots,C_k^*\}$ be a corresponding optimal fair clustering. 

\begin{observation}\label{obs:t-balance}
    Consider two $t$-balanced clusters with $b_i$ and $b_i'$ points, respectively from each group $1\le i\le \ell$. Then, the merger of these two clusters is also $t$-balanced. 
\end{observation}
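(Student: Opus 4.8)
The statement is essentially immediate from the definition of $t$-balance, so the plan is to just unpack it. Suppose the first cluster has $b_i$ points from group $i$ for each $i \in [\ell]$ and the second has $b_i'$ points from group $i$; being $t$-balanced means $b_i \le t \cdot b_j$ and $b_i' \le t \cdot b_j'$ for every pair $i, j \in [\ell]$. The merged cluster has exactly $b_i + b_i'$ points from group $i$. To verify it is $t$-balanced, I would fix an arbitrary pair $i, j \in [\ell]$ and add the two inequalities $b_i \le t \cdot b_j$ and $b_i' \le t \cdot b_j'$ to obtain $b_i + b_i' \le t \cdot (b_j + b_j')$, which is exactly the $t$-balance condition for the pair $(i,j)$ in the merged cluster. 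Since $i$ and $j$ were arbitrary, the merger is $t$-balanced.

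There is essentially no obstacle here; the one thing worth a sentence of care is that the $t$-balance condition is symmetric in $i$ and $j$ in the sense that requiring $b_i \le t b_j$ for all ordered pairs $(i,j)$ already captures both the ``at most $t$ times'' and ``at least $1/t$ times'' directions, so it suffices to check the single chain of inequalities above rather than treating upper and lower bounds separately. I would also note that the argument extends verbatim to the merger of any finite number of $t$-balanced clusters by induction (or by summing more than two inequalities), though only the two-cluster case is needed for the statement as written.
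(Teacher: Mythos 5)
Your argument is correct and is essentially identical to the paper's proof: both fix an arbitrary pair of groups $i,j$ and add the inequalities $b_i \le t\cdot b_j$ and $b_i' \le t\cdot b_j'$ to obtain $b_i + b_i' \le t\cdot(b_j + b_j')$. Your remarks on the symmetry of the ordered-pair formulation and the extension to finitely many clusters are sound but not needed here.
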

\begin{proof}
Consider any two indexes $i$ and $j$. The merged cluster has $b_i+b_i'$ and $b_j+b_j'$ points, respectively from groups $i$ and $j$. Now, $b_i+b_i'\le t\cdot b_j+t\cdot b_j'= t\cdot (b_j+b_j')$. Hence, the merged cluster is also $t$-balanced.
\end{proof}

Our algorithm has three major steps which we describe below. 

\subsection{Step 1: Computation of vanilla $k$ centers}
To begin, we use any poly-time $\alpha$-approximation algorithm for $k$-median on the point set $P$ to compute a set of $k$ centers $F \subset \Omega$. Let $\sigma: P \rightarrow F$ be the assignment function that maps each point in $P$ to a nearest center in $F$. Currently, the best polynomial-time approximation algorithm for $k$-median is a $2.675$-approximation by Byrka et al. \cite{byrka2014improved}.

The following lemma shows that an approximation algorithm for this \pfc instance can be obtained using $F$ as the centers.
\begin{lemma}\label{lem:approxfair}
    There exists a fair assignment $\gamma: P\rightarrow F$ such that $cost(\gamma)\le (2+\alpha) \cdot OPT$. 
\end{lemma}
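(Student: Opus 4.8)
The plan is to construct $\gamma$ explicitly from the optimal fair clustering by ``snapping'' each optimal center to its nearest center in $F$. First I would, for each $i \in [k]$, pick $f_i \in \argmin_{f \in F} d(c_i^*, f)$ and set $\gamma(p) := f_i$ for every $p \in C_i^*$. Since $\{C_1^*, \ldots, C_k^*\}$ partitions $P$, this defines a bona fide assignment $P \to F$.

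Next I would verify that $\gamma$ is fair. Its clusters are precisely the sets $\bigcup_{i\,:\,f_i = f} C_i^*$ over $f \in F$ (some possibly empty). Each $C_i^*$ is $t$-balanced by assumption, and Observation~\ref{obs:t-balance}, applied inductively, shows that a union of $t$-balanced clusters is again $t$-balanced; hence every cluster of $\gamma$ is $t$-balanced, i.e.\ $\gamma$ is fair.

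For the cost bound I would fix $p \in C_i^*$ and chain two triangle inequalities. On one hand $d(p, f_i) \le d(p, c_i^*) + d(c_i^*, f_i)$; on the other, since $f_i$ is a nearest point of $F$ to $c_i^*$ while $\sigma(p) \in F$, we have $d(c_i^*, f_i) \le d(c_i^*, \sigma(p)) \le d(c_i^*, p) + d(p, \sigma(p))$. Combining, $d(p, \gamma(p)) \le 2\,d(p, c_i^*) + d(p, \sigma(p))$. Summing over all $p \in P$ yields $cost(\gamma) \le 2 \sum_i \sum_{p \in C_i^*} d(p, c_i^*) + cost(\sigma) = 2\,OPT + cost(\sigma)$, and since $\sigma$ maps each point to its nearest center in $F$, the $\alpha$-approximation guarantee gives $cost(\sigma) \le \alpha \cdot OPT$, so $cost(\gamma) \le (2+\alpha)\,OPT$.

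I do not expect a genuine obstacle here: the only subtlety is that several optimal clusters may be routed to the same center of $F$, which is exactly the situation Observation~\ref{obs:t-balance} is designed to handle so that fairness is preserved; the remaining cost estimate is a routine double application of the triangle inequality.
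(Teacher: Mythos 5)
Your proposal is correct and follows essentially the same route as the paper's proof: snap each optimal center $c_i^*$ to its nearest center in $F$, invoke Observation~\ref{obs:t-balance} inductively for fairness of the merged clusters, and bound the cost with the same two-step triangle-inequality chain through $\sigma(p)$. No gaps.
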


\begin{proof}
    Let $\phi: F^* \rightarrow F$ map each $i \in F^*$ to its nearest $i' \in F$. Consider the following assignment $\gamma: P\rightarrow F$. For each $1\le i\le k$, assign all the points in the $i$-th optimal fair cluster $C_i^*$ to $\phi(c_i^*)$.

    First, we prove that $\gamma$ is fair, i.e., for any $c \in F$ and $i,j \in [\ell]$, $|\gamma^{-1}(c)\cap P_i| \le t\cdot |\gamma^{-1}(c)\cap P_j|$. Let $C_{i^1}^*,\ldots,C_{i^{\kappa}}^*$ be the clusters whose points are assigned to $c$ through $\gamma$. Then, the set of points that are assigned to $c$ is the union of these clusters. By applying Observation \ref{obs:t-balance} in an inductive manner, we obtain $|\gamma^{-1}(c)\cap P_i| \le t\cdot |\gamma^{-1}(c)\cap P_j|$. 

    Next, we bound the cost of $\gamma$. Consider any point $p\in P$ and let $p \in C_i^*$. 

    \begin{align*}
        d(p,\gamma(p)) = d(p,\phi(c_i^*))&\le d(p,c_i^*)+d(c_i^*,\phi(c_i^*)) \tag{triangle inequality}\\
        & \le d(p,c_i^*)+d(c_i^*,\sigma(p))\tag{$\phi(c_i^*)$ is the nearest neighbor of $c_i^*$ in $F$}\\
        & \le d(p,c_i^*)+d(c_i^*,p)+d(p,\sigma(p))\tag{triangle inequality}\\
        & = 2d(p,c_i^*)+d(p,\sigma(p))
    \end{align*}
    So, $cost(\gamma)\le \sum_{p\in P} (2d(p,c_i^*)+d(p,\sigma(p)))\le (2+\alpha) \cdot OPT$. 
\end{proof}
The more difficult task in obtaining our \pfc approximation is to find a fair assignment from $P$ to $F$, which is focus of the rest of this section.

\subsection{Step 2: Finding a low-cost, nearly-fair assignment}
We begin by introducing an LP relaxation of the problem of finding a low-cost \pfc solution using centers $F$.
Let $x_{j,i}$ be a variable indicating we move client $j \in P$ to center $i \in F$. 

In the rest of the algorithm, we try all distances $D$ of the form $d(i,j)$ for some $i \in F$ and some $j \in P$. We solve linear program $LP(D)$ below for each such distance $D$, rejecting any for which $LP(D)$ is infeasible.
For each value $D$ where the LP is feasible, we run the rest of the algorithm and return the best \pfc solution found over all values $D$ for which $LP(D)$ is feasible.

    \begin{equation*}[LP(D)]
    \begin{split}
        \text{min} \quad
    \sum_{i\in F} \sum_{j \in P}& d(i,j)\cdot x_{j,i}  \label{LP(D)} \\
    \text{s.t.} \quad
    \sum_{i \in F} x_{j,i}&= 1, \quad \forall j \in P \\
    \sum_{j \in P_{a}}x_{j,i} &\leq t \cdot \sum_{j^{'} \in P_{b}}x_{j^{'},i}, \quad \forall i\in P \,\text{and}\, (a,b) \in [\ell] \times [\ell] \\
    x_{j,i} &= 0, \quad \,\text{ if $d(i,j) > D$} \\
    x_{j,i} &\geq 0, \quad \,\forall j\in P, i\in F
    \end{split}
    \end{equation*}

For any $D$ and any feasible solution $x$ to $LP(D)$, let $G(D,x)$ be the {\em support} graph with vertices $F \cup P$ and edges $\{\{i,j\} : i \in F, j \in P, x_{i,j} > 0\}$.

\begin{lemma}\label{lem:D}
Let $\gamma$ be the assignment from Lemma \ref{lem:approxfair} and let $D = \max_{j \in P} d(j,\gamma(j))$. 
Then $LP(D)$ is feasible and has optimal solution value at most $(2+\alpha) \cdot OPT$. For any $j \in P$ and $i \in F$ in the same connected component $C \subseteq F \cup P$ of $G(D,x)$ (where $x$ is any feasible solution to $LP(D)$) we have $d(i,j) \leq |F \cap C| \cdot D$.
\end{lemma}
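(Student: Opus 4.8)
The lemma bundles three claims, and I would handle them in increasing order of difficulty. The first two are essentially immediate from what has already been set up. For feasibility with value at most $(2+\alpha)\cdot OPT$: take the fair assignment $\gamma$ from Lemma~\ref{lem:approxfair} and define $x_{j,i} = 1$ if $\gamma(j) = i$ and $0$ otherwise. The assignment constraint $\sum_i x_{j,i} = 1$ holds since $\gamma$ is a function; the fairness constraints $\sum_{j \in P_a} x_{j,i} \le t \sum_{j' \in P_b} x_{j',i}$ hold for every center $i$ precisely because $\gamma$ is fair (this is exactly the definition of fair, with the integral counts $|\gamma^{-1}(i) \cap P_a|$); and the constraint $x_{j,i} = 0$ when $d(i,j) > D$ holds because $D = \max_{j} d(j,\gamma(j))$, so whenever $x_{j,i} = 1$ we have $d(i,j) = d(j,\gamma(j)) \le D$. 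The objective value of this integral solution is exactly $cost(\gamma) \le (2+\alpha)\cdot OPT$, so the LP optimum is at most that.

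The third claim — the diameter bound within connected components of $G(D,x)$ — is the substantive part, though still elementary. The key observation is that $G(D,x)$ is bipartite between $F$ and $P$, and every edge $\{i,j\}$ with $x_{i,j} > 0$ satisfies $d(i,j) \le D$ (again by the support constraint $x_{j,i} = 0$ when $d(i,j) > D$). So for any two vertices $u, v$ in the same component $C$, there is a path $u = u_0, u_1, \dots, u_m = v$ in $G(D,x)$, and by the triangle inequality $d(u,v) \le \sum_{r=0}^{m-1} d(u_r, u_{r+1}) \le m \cdot D$. The task is then to bound the relevant path length $m$ by $|F \cap C|$. Since the graph is bipartite between $F$ and $P$, a path alternates between the two sides. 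The plan is to take a \emph{shortest} path in $G(D,x)$ between $i \in F$ and $j \in P$; on a shortest path no vertex repeats, so the number of vertices from $F$ on the path is at most $|F \cap C|$, and a shortest path between a vertex in $F$ and a vertex in $P$ (opposite sides of a bipartition) has an even... wait, I need to be careful here — a path from $F$ to $P$ has odd edge-count, visiting, say, $r$ vertices from $F$ and $r$ from $P$ for $m = 2r-1$ edges. That would give $d(i,j) \le (2|F \cap C| - 1)\cdot D$, which is not quite the stated bound.

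So the main obstacle is getting the constant exactly right: shaving the factor down to $|F \cap C|\cdot D$ rather than $(2|F\cap C|-1)\cdot D$. The fix is to group consecutive edges cleverly: on the alternating path $i = f_0, p_1, f_1, p_2, f_2, \dots$, I would bound $d(f_{s-1}, f_s) \le d(f_{s-1}, p_s) + d(p_s, f_s) $ — no, that still doesn't obviously help. The cleaner route is probably: between any two \emph{consecutive distinct centers} $f_{s-1}, f_s$ on the path (connected through one intermediate client $p_s$), we have $d(f_{s-1}, f_s) \le 2D$; but actually I think the intended argument uses that each client $p$ on the path contributes two edges each of length $\le D$, i.e. $d$ from the client to its two neighboring centers, and one then charges the ``cost'' of traversing the path to centers. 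Let me reconsider: writing the path as $j = p_0, f_1, p_1, f_2, p_2, \dots, f_q$ (so it starts at a client and ends at a center $i = f_q$), the edges are $\{p_0,f_1\}, \{f_1,p_1\}, \dots, \{p_{q-1}, f_q\}$, totalling $2q - 1$ edges... this still gives roughly $2|F\cap C|$. I suspect the paper either (a) states the bound a bit loosely and the factor is really $2|F\cap C|$, or (b) uses a BFS-layer argument where the component has a spanning tree and the $F$-to-$F$ distance telescopes. I would look for whether only a weaker bound (within a constant factor of $|F\cap C|\cdot D$) is actually needed downstream — the later steps speak of distances being "$O(k)$ times the optimum," so a factor of $2$ is harmless. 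My plan, then, is: prove the clean statement $d(i,j) \le (2|F\cap C| - 1)\cdot D \le 2|F\cap C|\cdot D$ via the shortest-path/triangle-inequality argument, and if the tighter $|F\cap C|\cdot D$ is genuinely wanted, obtain it by observing that a shortest path visiting $c$ distinct centers has at most $c$ clients \emph{strictly between} them only if... — i.e., I would recount the vertices: a shortest $i$–$j$ path with $i,j$ on opposite sides visits some number $c \le |F \cap C|$ of $F$-vertices and the same number of $P$-vertices, but its first and last vertices being one from each side means it has $c$ of one and $c$ of the other with $2c-1$ edges; to reach exactly $c \cdot D$ I would instead bound $d(i,j)$ by summing over the $c-1$ center-to-center hops (each $\le 2D$) plus the two pendant half-edges, or simply accept the factor and move on since $O(k)$ absorbs it.
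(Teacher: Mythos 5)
Your proposal follows the same route as the paper's proof: feasibility and the value bound come from the $\{0,1\}$-indicator solution for $\gamma$, and the diameter bound comes from a minimum-hop path in $G(D,x)$, the triangle inequality, and the observation that the centers on such a path are distinct. Your hesitation about the constant is justified, and you should not keep searching for a way to reach $|F\cap C|\cdot D$ exactly: an alternating path from $j\in P$ to $i\in F$ that visits $b$ distinct centers has $2b-1$ edges, so the honest bound from this argument is $(2b-1)D\le(2|F\cap C|-1)D$. The paper's own displayed inequality lists only $b+1$ of those $2b-1$ edge lengths (it drops the edges $\{j_a,i_a\}$ for $2\le a\le b-1$), so it does not actually establish the stated constant; and a line-metric example with $j_1,i_1,j_2,i_2,j_3,i_3$ equally spaced at distance $D$, each interior client split $\tfrac12$--$\tfrac12$ between its two neighbouring centers, gives $d(j_1,i_3)=5D$ while $|F\cap C|=3$, showing the factor $2|F\cap C|-1$ is genuinely needed. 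As you observe, the bound is only consumed downstream as $O(k)\cdot D$, so the factor of $2$ is absorbed into the $O(k^2\cdot\ell\cdot t)$ guarantee; the right move is simply to prove and use $d(i,j)\le(2|F\cap C|-1)\cdot D$.
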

\begin{proof}
That $LP(D)$ is feasible and has value at most $(2+\alpha) \cdot OPT$ is simply because the natural $\{0,1\}$-assignment corresponding to $\gamma$ is a feasible solution to the LP.

Now consider any $i \in F$ and $j \in P$ in the same connected component in $G(D,x)$. Let $j = j_1, i_1, j_2, i_2, \ldots, j_b, i_b = i$ be any path in $G(D,x)$ with the minimum number of edge hops. Each hop uses an edge with distance at most $D$ by the LP constraints, so
\[ d(i,j) \leq d(j_1, i_1) + \sum_{a = 2}^{b} d(i_{a-1},j_a) + d(j_b, i_b) \leq |F \cap C| \cdot D \]
where the last bound is because the the path is a minimum-hop path so the centers $i_1, \ldots, i_b$ are distinct.
\end{proof}

From this point on, we will analyze the algorithm for the case $D = \max_{j \in P} d(j, \gamma(j)) \leq (2+\alpha) \cdot OPT$. Since our final algorithm takes the best solution over all $D$, the final solution's cost will be no worse than the cost of the solution in this case. We remark here that our final approximation guarantee will be of the form $O(OPT) + O(k^2 \cdot \gamma \cdot t \cdot D)$. In typical instances, one would expect $D$ to be much smaller than $cost(\gamma)$ so the approximation guarantee is likely to be much better in practice.
 
We now describe a simple procedure that rounds an optimal solution to $LP(D)$ to obtain a low-cost integer assignment of points that is close to a fair assignment. 

\begin{observation}
The restriction of $x^*$ to any connected component $C \subseteq P \cup F$ of $G(x^*, D)$ is a feasible solution to the \pfc instance with clients $P \cap C$. Furthermore, $P \cap C$ is balanced, meaning $|P_a \cap C| \leq t \cdot |P_{a'} \cap C|$ for any two $a, a' \in [\ell]$.
\end{observation}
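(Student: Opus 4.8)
The plan is to exploit the single structural fact that a connected component of $G(x^*,D)$ has no edges leaving it: if $j \in P \cap C$ and $x^*_{j,i} > 0$, then $\{i,j\}$ is an edge of $G(x^*,D)$, so $i$ lies in the same component, i.e. $i \in F \cap C$; symmetrically, every client with a positive variable at a center $i \in F \cap C$ lies in $P \cap C$. Everything else is bookkeeping with the constraints of $LP(D)$.

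For the first assertion, I would restrict $x^*$ to the variables $x^*_{j,i}$ with $j \in P \cap C$ and $i \in F \cap C$. By the observation above, for each $j \in P \cap C$ all of $j$'s positive variables are already among these, so $\sum_{i \in F \cap C} x^*_{j,i} = \sum_{i \in F} x^*_{j,i} = 1$. Likewise, for each center $i \in F \cap C$ and each $a,b \in [\ell]$, the sums $\sum_{j \in P_a} x^*_{j,i}$ and $\sum_{j' \in P_b} x^*_{j',i}$ appearing in the fairness constraint of $LP(D)$ receive contributions only from clients in $P \cap C$, so the same inequality holds verbatim with $P_a, P_b$ replaced by $P_a \cap C, P_b \cap C$. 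The distance constraints ($x^*_{j,i}=0$ whenever $d(i,j)>D$) and nonnegativity are inherited directly. Hence the restricted vector is a feasible fractional solution for the \pfc instance on clients $P \cap C$ with centers $F \cap C$.

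For the balance claim, I would fix $a,a' \in [\ell]$ and sum the fairness constraint of $LP(D)$ over all $i \in F \cap C$, obtaining $\sum_{i \in F \cap C}\sum_{j \in P_a} x^*_{j,i} \le t \sum_{i \in F \cap C}\sum_{j' \in P_{a'}} x^*_{j',i}$. Swapping the order of summation on each side and again using that a client outside $C$ contributes $0$ to every center inside $C$, while a client $j \in P_a \cap C$ contributes $\sum_{i \in F \cap C} x^*_{j,i} = \sum_{i \in F} x^*_{j,i} = 1$, the left side equals $|P_a \cap C|$ and the right side equals $t\cdot|P_{a'} \cap C|$. This yields $|P_a \cap C| \le t\cdot|P_{a'} \cap C|$ (the statement being vacuous if $P \cap C = \emptyset$).

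There is no genuine obstacle here; the only point requiring care is the first sentence — that connectedness of $C$ forces all positive mass incident to a vertex of $C$ to remain inside $C$ — after which the assignment and fairness constraints of $LP(D)$ transfer to the sub-instance unchanged, and summing the fairness constraints over the centers of $C$ collapses them to the desired combinatorial balance inequality.
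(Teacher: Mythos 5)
Your proof is correct and follows essentially the same route as the paper: the paper dismisses feasibility as straightforward and, for the balance claim, appeals to a fractional generalization of Observation~\ref{obs:t-balance} (merging balanced fractional assignments over the centers of $C$), which is precisely what your explicit double summation of the fairness constraints, collapsed via $\sum_{i} x^*_{j,i}=1$, carries out. Your write-up is simply more explicit, in particular about the key fact that a connected component has no positive-weight edges leaving it.
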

\begin{proof}
Most statements are straightforward. To see that $P \cap C$ is balanced,
note that the fractional assignment to all centers in $F \cap C$ is balanced.
Since the total number of clients in $P_a \cap C$ for each $a \in [\ell]$ is the sum of their fractional assignments to each center in $F \cap A$. So by Observation \ref{obs:t-balance}, more precisely a straightforward generalization of it to fractional assignments, $P \cap C$ is balanced.
\end{proof}

From now on, we may assume $G(D, x^*)$ is connected. If it was not, we run the following algorithm on the restriction of $x^*$ to each connected component and output the union of the solutions. For each component $C$ let $\nu_C$ be the value of $x^*$ restricted to $C$. Our algorithm will find a solution with cost only $O(|F \cap C|^2 \cdot \ell \cdot t \cdot D)$ more than $\nu_C$ in each component $C$. Adding this for each component $C$ shows the final solution cost would be at most $O(k^2 \cdot \ell \cdot t \cdot D) + (2+\alpha) \cdot OPT = O(k^2 \cdot \ell \cdot t \cdot OPT$), as required.



The initial nearly-fair solution is obtained from an optimal solution $x^*$ as follows:
\begin{itemize}
    \item First, for each $i \in F$, set $\ell_i = \min_{a \in [\ell]} \sum_{j \in P_a} x^*_{j,i}$. Here, $\ell_i$ represents the smallest amount that any group of clients is fractionally assigned to center $i$.
    \item Then find a min-cost integral assignment $\sigma_{int} : P \rightarrow F$, where $|\sigma_{int}^{-1}(i)\cap P_a| \in [\lfloor l_{i} \rfloor, \lceil t\cdot l_{i} \rceil]$.
\end{itemize}



\begin{lemma}
    Such assignment $\sigma_{int}$ exists and has cost at most the optimum solution value of $LP(D)$.
\end{lemma}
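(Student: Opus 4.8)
The plan is to realize the search for $\sigma_{int}$ as a minimum-cost integral flow problem and then invoke the integrality of transportation polytopes. Build a bipartite network as follows: for every client $j \in P$ create a source with supply $1$; for every center $i \in F$ and group $a \in [\ell]$ create a sink node $(i,a)$ whose total intake is constrained to lie in the interval $[\lfloor \ell_i \rfloor, \lceil t\cdot \ell_i \rceil]$; and for every $j \in P_a$ add an arc $j \to (i,a)$ of cost $d(i,j)$ for each $i \in F$ (one may, but need not, restrict to pairs with $d(i,j) \le D$). Since each client belongs to exactly one group, the arcs out of group-$a$ clients touch only the sinks $(i,a)$, so the network decomposes into $\ell$ independent transportation instances, one per group. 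An integral feasible flow is exactly an assignment $\sigma_{int}: P \to F$ with $|\sigma_{int}^{-1}(i) \cap P_a| \in [\lfloor \ell_i \rfloor, \lceil t\cdot \ell_i \rceil]$ for all $i \in F, a \in [\ell]$, and its total cost is $cost(\sigma_{int})$.

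Next I would exhibit a feasible fractional flow of cost at most the optimum of $LP(D)$, using $x^*$ itself. Fix $i \in F$ and $a \in [\ell]$ and write $y_{i,a} = \sum_{j \in P_a} x^*_{j,i}$. By definition $y_{i,a} \ge \ell_i$, and taking $b = \argmin_{b' \in [\ell]} \sum_{j \in P_{b'}} x^*_{j,i}$, the fairness constraint of $LP(D)$ gives $y_{i,a} \le t\cdot y_{i,b} = t\cdot \ell_i$. Hence $y_{i,a} \in [\ell_i, t\cdot \ell_i] \subseteq [\lfloor \ell_i \rfloor, \lceil t\cdot \ell_i \rceil]$, so routing $x^*_{j,i}$ units along each arc $j \to (i,a)$ (for $j \in P_a$) is a feasible fractional flow: it sends exactly one unit out of every client since $\sum_{i \in F} x^*_{j,i} = 1$, it respects every sink window, and its cost equals $\sum_{i \in F}\sum_{j \in P} d(i,j)\cdot x^*_{j,i}$, the optimum value of $LP(D)$. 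In particular the instance is nonempty, and moreover the windows are jointly satisfiable within each group since $\sum_{i} \lfloor \ell_i \rfloor \le \sum_i y_{i,a} = |P_a| \le \sum_i \lceil t\cdot \ell_i \rceil$.

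Finally, the constraint matrix of a transportation problem with integral supplies and integral lower/upper bounds on the sinks is totally unimodular (equivalently, it is a minimum-cost flow LP with integral data), so its minimum-cost fractional solution is attained at an integral vertex. Applying this to each of the $\ell$ per-group instances and taking the union produces an integral minimum-cost assignment $\sigma_{int}$, whose cost is at most the cost of the fractional flow induced by $x^*$, i.e.\ the optimum of $LP(D)$. The one point that needs care is the replacement of the coupled fairness constraints $\sum_{j \in P_a} x_{j,i} \le t\sum_{j \in P_b} x_{j,i}$ — which do not define a network-flow polytope — by the decoupled windows $[\lfloor \ell_i \rfloor, \lceil t\cdot \ell_i \rceil]$: one must verify both that these windows are wide enough to contain $x^*$ (handled above via the $\argmin$ choice of $b$) and that they are loose enough to preserve integrality (automatic, since integral sink bounds keep the transportation polytope integral). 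Note that $\sigma_{int}$ itself need not be fair — the center $i$ may have, e.g., too few clients from the minimizing group relative to $\lceil t\cdot \ell_i\rceil$ of another group — and restoring exact fairness is precisely the task of Step 3.
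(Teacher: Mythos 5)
Your proof is correct and follows essentially the same route as the paper: build the bipartite network with a sink node $(i,a)$ per center--group pair, observe that $x^*$ induces a feasible fractional assignment respecting the windows $[\lfloor \ell_i \rfloor, \lceil t\cdot \ell_i \rceil]$ with cost equal to the $LP(D)$ optimum, and invoke integrality of the resulting transportation polytope. Your extra checks (the explicit verification that $y_{i,a}\in[\ell_i,t\cdot\ell_i]$ via the $\argmin$ group, and the per-group decomposition) are sound elaborations of what the paper leaves implicit.
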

\begin{proof}
Consider a bipartite graph with nodes $P$ on the left side and nodes $\{(i,c) : i \in F, a \in [\ell]\}$ on the right (i.e. a node for each group of clients for each center in $F$). For each $j \in P$ with, say, $j \in P_a$ and each $i \in F$ we include an edge $\{j,(i,a)\}$ with cost $d(j,i)$. In this way, $x^*$ describes a fractional assignment such that each $j \in P$ is assigned to an extent of exactly $1$ to nodes on the right side and each node $(i,a)$ on the right has received a total assignment between $\lfloor \ell_i \rfloor$ and $\lceil t \cdot \ell_i \rceil$.

Furthermore, the cost of this fractional assignment is exactly the optimal LP solution cost. By integrality of the bipartite assignment polytope with integer upper and lower bounds, there is an integer assignment $\sigma_{int}$ of no greater cost. Such an assignment can be computed using any polynomial-time minimum-cost assignment algorithm.
\end{proof}

As a result, $cost(\sigma_{int})$ is still $O(OPT)$ and the assignment $\sigma_{int}$ is nearly fair in that for any $i \in F$ and any $a,b \in [\ell]$ we have $|\sigma^{-1}(i) \cap P_a| \leq t \cdot |\sigma^{-1}(i) \cap P_b| + t$.
The fairness condition is only violated in extreme cases where we have exactly $\lfloor l_{i} \rfloor$ clients of some color sent to center $i$ and strictly more than $t \cdot \lfloor l_{i} \rfloor$ clients of some other color assigned to center $i$. That is, we have $\lfloor t \cdot \ell_i \rfloor \leq t \cdot \lfloor \ell_i \rfloor + t$, so if $i$ receives at least $\lfloor \ell_i \rfloor + 1$ clients of each color then the assignment to $i$ would indeed be fair.

In the remainder of the algorithm, we describe how to change the destination of $O(k \cdot \ell \cdot t)$ clients in the assignment $\sigma_{int}$ to get a truly fair solution. Each client that has its destination changed will be moved to another center in its connected component, so by Lemma \ref{lem:D} the total cost will increase by at most $O(k^2 \cdot \ell \cdot t \cdot D)$.



\subsection{Step 3: Fixing the assignment}\label{sec:fixing}
To modify $\sigma_{int}$ to make it a feasible \pfc solution, we will first unassign a limited number clients to get a fair partial assignment and then reassign them to new locations.

Define a set $S$ of {\bf unassigned} clients as follows: for each $i \in F$ and each color $a$ let $\Delta_{i,a} = \max\{0, |\sigma^{-1}_{int}(i) \cap P_a| - t \cdot \lfloor \ell_i \rfloor\}$ denote the number of clients of color $a$ assigned to $i$ in excess of $t$ times the lower bound used in the computation of $\sigma_{int}$. If $\Delta_{i,a} > 0$, add any $\Delta_{i,a}$ clients from $\sigma^{-1}_{int}(i) \cap P_a$ to $S$.

\begin{claim}
    $|S| \leq k \cdot \ell \cdot t$
\end{claim}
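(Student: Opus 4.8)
The plan is to bound $|S|$ by counting, for each center $i \in F$, the total number of clients of all colors that get added to $S$ at that center, and then summing over all $k$ centers. So fix a center $i \in F$. For each color $a \in [\ell]$, the number of clients of color $a$ added to $S$ at $i$ is exactly $\Delta_{i,a} = \max\{0, |\sigma_{int}^{-1}(i) \cap P_a| - t\cdot\lfloor \ell_i \rfloor\}$. First I would invoke the construction of $\sigma_{int}$: by definition of the integral assignment, $|\sigma_{int}^{-1}(i) \cap P_a| \leq \lceil t \cdot \ell_i \rceil$ for every color $a$. Hence $\Delta_{i,a} \leq \lceil t\cdot \ell_i \rceil - t\cdot\lfloor \ell_i\rfloor$.

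Next I would bound this quantity. Write $\ell_i = \lfloor \ell_i \rfloor + f_i$ where $f_i \in [0,1)$ is the fractional part. Then $\lceil t \cdot \ell_i \rceil - t\cdot\lfloor\ell_i\rfloor = \lceil t\cdot\lfloor\ell_i\rfloor + t f_i\rceil - t\cdot\lfloor\ell_i\rfloor = \lceil t f_i \rceil$ (using that $t\cdot\lfloor\ell_i\rfloor$ is an integer since $t$ is an integer). Since $0 \leq f_i < 1$, we get $\lceil t f_i\rceil \leq \lceil t \rceil = t$ (in fact it is at most $t$, and strictly less than $t$ unless $f_i$ is close to $1$, but $t$ suffices). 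So $\Delta_{i,a} \leq t$ for every color $a$ and every center $i$. Summing over the $\ell$ colors, the total number of clients added to $S$ at center $i$ is at most $\ell \cdot t$, and summing over the $k$ centers in $F$ gives $|S| \leq k \cdot \ell \cdot t$, as claimed.

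I do not expect any real obstacle here — the claim is essentially a bookkeeping consequence of the rounding intervals $[\lfloor \ell_i\rfloor, \lceil t\cdot\ell_i\rceil]$ that define $\sigma_{int}$. The only point that requires a moment's care is the arithmetic identity $\lceil t\cdot\ell_i\rceil - t\cdot\lfloor\ell_i\rfloor = \lceil t\{\ell_i\}\rceil \le t$, which relies on $t$ being an integer so that $t\cdot\lfloor\ell_i\rfloor$ is an integer and can be pulled out of the ceiling; this matches the earlier observation in the paper that $\lfloor t\cdot\ell_i\rfloor \le t\cdot\lfloor\ell_i\rfloor + t$. Everything else is a straightforward sum over colors and centers.
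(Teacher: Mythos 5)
Your proof is correct and follows the same route as the paper: bound each $\Delta_{i,a}$ by $\lceil t\cdot\ell_i\rceil - t\cdot\lfloor\ell_i\rfloor \le t$ using the upper bound from the construction of $\sigma_{int}$, then sum over the $\ell$ colors and $k$ centers. You merely spell out the arithmetic step $\lceil t\cdot\ell_i\rceil - t\cdot\lfloor\ell_i\rfloor \le t$ in more detail than the paper, which simply asserts it.
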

\begin{proof}
By the upper bound imposed when $\sigma$ is computed, we have $|\sigma^{-1}_{int}(i) \cap P_a| \leq \lceil t \cdot \ell_i \rceil$. So
\[ \Delta_{i,a} \leq \lceil t \cdot \ell_i \rceil - t \cdot \lfloor \ell_i \rfloor \leq t. \]
Summing this bound for all $i \in F$ and $a \in [\ell]$ yields claim.
\end{proof}

Observe that if we ignore $S$, each center $i$ receives between $\lfloor \ell_i \rfloor$ and $t \cdot \lfloor \ell_i \rfloor$ clients of each color, so this partial assignment is fair. Our next task is to reassign all clients in $S$ such that the fairness remains preserved. In doing so, we may reassign a limited number of additional clients beyond those in $S$.

Let $\sigma$ be the current partial assignment, initially it is the restriction of $\sigma_{int}$ to $P\setminus S$, i.e. those clients that were not unassigned. We also maintain integer bounds $\ell'_i$ for each center, initially set $\ell'_i = \min_{a \in [\ell]} |\sigma^{-1}(i) \cap P_a|$.

~

\noindent
{\bf Invariant}: We will maintain for each center $i$ and each color $a$ that $|\sigma^{-1} (i) \cap P_a| \in [\ell'_i, t \cdot \ell'_i]$, i.e. that the partial assignment is fair. This is true for the initial partial assignment $\sigma$ by how we unassigned clients and by how we set $\ell'_i$.

~

Let $i^*$ denote a particular center in $F$, this may be arbitrarily chosen but should remain fixed throughout the algorithm. Repeat following while $S \neq \emptyset$. Intuitively, the following steps will check if some unassigned client can be assigned to any center while preserving fairness of the current assignment. If not, it moves some clients from their currently-assigned center to $i^*$ so that some unassigned client can also be assigned to $i^*$ while maintaining fairness.

\begin{itemize}
    \item {\bf First Case}: Formally, if there is some $j \in S$ with, say, $j \in P_a$ and some $i \in F$ such that $|\sigma^{-1}(i) \cap P_a| < t \cdot \ell'_i$, set $\sigma(j) := i$ and remove $j$ from $S$ ($j$ is now assigned).
    \item {\bf Second Case}: Otherwise, pick any $j \in S$ and let $a \in [\ell]$ be such that $j \in P_a$. Observe $|\sigma^{-1}(i^*) \cap P_a| = t \cdot \ell'_{i^*}$ since the previous case did not apply. Let $A = \{b \in [\ell] : |\sigma^{-1}(i^*) \cap P_b| = \ell'_{i^*}\}$ (it could be $A = \emptyset$). For each $b \in A$,
    pick any one $i_b \in F$ with $|\sigma^{-1}(i_b) \cap P_b| > \ell'_{i_b}$.
    Then pick any one $j_b \in P_b$ with $\sigma(j_b) = i_b$ and reassign $\sigma(j_b) := i^*$. After doing so for all $j_b \in A$, set $\sigma(j) := i^*$, remove $j$ from $S$, and update $\ell'_{i^*} = \ell'_{i^*} + 1$.
\end{itemize}
In the second step, it could be that $A = \emptyset$: recall the invariant maintains each center has between $\ell'_i$ and $t \cdot \ell'_i$ clients from each group but we do not necessarily insist that the ``lower bound'' $\ell'_i$ is met. If $A = \emptyset$, then the second step merely increases $\ell'_{i^*}$ so $j$ can be assigned to $i^*$ while maintaining the invariants.

Next, we prove that there is some $i_b \in F$ for each $b \in A$ in the second case. Given this, it is straightforward to verify the invariants hold after each iteration.

\begin{lemma}
Whenever the second case is executed, we can always find a center $i_b$ with $|\sigma^{-1}(i_b) \cap P_b| > \ell'_{i_b}$ for each $b \in A$.
\end{lemma}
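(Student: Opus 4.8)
The plan is to argue by a counting/conservation argument on the group sizes. The key observation is that throughout the algorithm, the quantity "total number of clients of group $b$ currently assigned somewhere" only ever decreases by moves that unassign clients, and it never drops below the total lower-bound demand. More precisely, fix a color $b \in A$. At the moment the second case is executed, we have $|\sigma^{-1}(i^*) \cap P_b| = \ell'_{i^*}$ (that is what it means for $b$ to be in $A$), so $i^*$ cannot supply the needed client of color $b$ without violating its own lower bound; we must find a \emph{different} center $i_b$ with a strict surplus $|\sigma^{-1}(i_b) \cap P_b| > \ell'_{i_b}$. Suppose for contradiction that no such center exists, i.e. every center $i \in F$ (including $i^*$) satisfies $|\sigma^{-1}(i) \cap P_b| = \ell'_i$ (it cannot be less by the invariant). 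Summing over all centers gives that the total number of currently-assigned clients of color $b$ equals $\sum_{i \in F} \ell'_i$.

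First I would set up the relevant global invariant. Let $L = \sum_{i \in F} \ell'_i$; this is the total "lower-bound budget" across all centers, and it only increases over time (each iteration of the second case bumps exactly one $\ell'_{i^*}$ by one and never decreases any other). I would then track $N_b :=$ the number of clients of color $b$ that are \emph{currently assigned} (i.e. in $P_b \setminus$ the current set of unassigned/to-be-reassigned clients; note $S$ plus the $j_b$'s we are about to move). The clean way to phrase the contradiction: in the second case, the algorithm is trying to relocate a client of color $b$ \emph{from} some $i_b$ \emph{to} $i^*$, which is a pure relocation — it does not change $N_b$. So if every center is exactly at its lower bound for color $b$, then $N_b = L$; but I need to show $N_b > L$ for at least one color, or rather that $N_b \ge L$ with the constraint structure forcing strict surplus somewhere.

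The cleanest route, I expect, is to compare against color $a$ (the color of the client $j \in S$ we are trying to assign) or to use the fact that $|S| > 0$. Since we are in the second case, $|\sigma^{-1}(i^*) \cap P_a| = t \cdot \ell'_{i^*}$ (stated in the excerpt). The total number of assigned-or-pending clients of each color started out $t$-balanced with respect to one another across the whole instance (from the Observation that $P \cap C$ is balanced, combined with the fact that unassigning in Step 3 preserves a suitable balance), and relocations preserve these totals. I would argue: the set of all clients of color $b$ that ever need a home (currently assigned ones, plus $S \cap P_b$, plus the at-most-one $j_b$) has size at least $\sum_i \ell'_i + (\text{slack})$, because color $a$ is sitting at $t \cdot \ell'_{i^*}$ at center $i^*$ alone while $b$ is only at $\ell'_{i^*}$ there, and $t$-balance of the whole-instance totals then forces enough color-$b$ clients to exist somewhere; since each center holds at most its lower bound under the contradiction hypothesis, and $i^*$ is pinned at exactly $\ell'_{i^*}$, the pigeonhole gives a center with strict surplus.

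\textbf{The main obstacle} I anticipate is making the conservation bookkeeping airtight across the two cases of the loop: I must confirm that (i) no step ever makes $|\sigma^{-1}(i) \cap P_b| < \ell'_i$ (this is exactly the maintained invariant, so it is safe to invoke), (ii) the total count of color-$b$ clients "in play" (assigned plus in $S$ plus the transient $j_b$'s) is conserved by First-Case assignments and Second-Case relocations and only the initial unassignment in the $\Delta_{i,a}$ step reduces it — and that reduction removes clients of color $a$ only when $i$ had a genuine excess, so the global $t$-balance of these in-play totals is never broken. Once that is pinned down, the final contradiction is a one-line pigeonhole: if every center were exactly at its color-$b$ lower bound, the number of in-play color-$b$ clients would be $\sum_i \ell'_i$, yet the $t$-balance against color $a$ (which is at $\ge t\cdot \ell'_{i^*}$ at $i^*$ and $\ge \ell'_i$ elsewhere) forces strictly more color-$b$ clients to be in play, so some center must exceed $\ell'_i$, and that center can serve as $i_b$.
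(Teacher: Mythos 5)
Your overall strategy --- assume every center is pinned at exactly $\ell'_i$ for color $b$, count, and contradict the global $t$-balance between colors $a$ and $b$ --- is the same one the paper uses, but as written your pigeonhole does not close, and the two facts you are missing are exactly the ones that make the paper's argument work. First, you never rule out that the ``extra'' color-$b$ clients forced to exist by $t$-balance are simply sitting unassigned in $S$. If $S \cap P_b \neq \emptyset$, then the total of in-play color-$b$ clients can exceed $\sum_i \ell'_i$ while every center still holds exactly $\ell'_i$ of them, and no center with a surplus exists --- your conclusion ``some center must exceed $\ell'_i$'' simply does not follow. The paper closes this hole by observing that, under the contradiction hypothesis, every center has $|\sigma^{-1}(i)\cap P_b| = \ell'_i < t\cdot\ell'_i$ (here $t\ge 2$ is used), so any unassigned color-$b$ client would trigger the First Case; since the First Case does not apply, \emph{all} of $P_b$ is assigned and $|P_b| = \sum_{i\in F}\ell'_i$ exactly.

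Second, your lower bound on $|P_a|$ is too weak. You only use that $i^*$ is saturated at $t\cdot\ell'_{i^*}$ for color $a$ and that other centers hold at least $\ell'_i$; that gives $|P_a| \ge 1 + t\cdot\ell'_{i^*} + \sum_{i\neq i^*}\ell'_i$, which does not contradict $|P_a| \le t\cdot|P_b| = t\sum_i \ell'_i$ unless $\sum_{i\neq i^*}\ell'_i = 0$. The needed observation is that the client $j\in S\cap P_a$ being handled in the Second Case cannot be assigned to \emph{any} center (again because the First Case does not apply), so \emph{every} center satisfies $|\sigma^{-1}(i)\cap P_a| = t\cdot\ell'_i$, whence $|P_a| \ge 1 + t\sum_{i\in F}\ell'_i > t\,|P_b|$, contradicting the feasibility ($t$-balance) of the instance restricted to this component. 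Finally, your worry about conservation bookkeeping across iterations is unnecessary: the argument is entirely static, using only the maintained invariant, the failure of the First Case at the current moment, and the balance of the component; no history of past moves is needed.
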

\begin{proof}
Suppose otherwise, i.e. that for some $b \in A$ we have $|\sigma^{-1}(i) \cap P_b| = \ell'_i$ for every $i\in F$. Since the first case does not apply, then every $j \in P_b$ must be assigned to some center as any unassigned $j \in P_b$ could be assigned to {\em any} center in this case (note we are using $t > 1$ in this argument). 
So $|P_b| = \sum_{i \in F} \ell'_i$. On the other hand, since there exists a $j \in P_a$ that cannot be assigned to any center (again since the first case does not apply) and since $j$ itself is not yet assigned, $|P_a| \geq 1 + t \cdot \sum_{i \in F} \ell'_i$. This contradicts the feasibility assumption of the given instance, i.e., $|P_a| \leq t \cdot |P_b|$.
\end{proof}

We now bound the total number of clients $j$ with $\sigma(j) \neq \sigma_{int}(j)$ after the algorithm concludes. For any such $j$, either $j \in S$ or $j$ was moved in the second case above in order to increase the lower bound $\ell'_{i^*}$. Each time this lower bound was increased, at most $\ell$ clients were reassigned. So a simple bound on the number of clients reassigned this way is $\ell \cdot |S| \leq k \cdot \ell^2 \cdot t$.

But we can refine this analysis a bit by showing the second case only executes at most $k$ times in total. The following claim is the starting point for this argument.

\begin{claim}\label{claim:move}
Consider an iteration where the second case is being executed and let $j \in S$ be as in the description of this case where, say, $j \in P_a$. No client in $P_a$ that is currently assigned to $i^*$ at the start of this iteration was ever assigned to a different center in any previous iteration.
\end{claim}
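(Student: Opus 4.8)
The plan is to track the ``reassignment history'' of clients in group $P_a$ that currently sit at $i^*$ and argue that, were any such client to have been moved before, it would contradict how the second case chooses its victims. First I would recall exactly how a client $j' \in P_a$ can come to be assigned to $i^*$: either it was assigned directly to $i^*$ in the \textbf{First Case} (which never happens, since in that case we only assign to a center $i$ with $|\sigma^{-1}(i) \cap P_a| < t \cdot \ell'_i$, and one checks this would have to be $i^*$ only if $i^*$ is not saturated in color $a$, but in fact the \textbf{First Case} can assign to $i^*$), or it was assigned to $i^*$ at the end of a previous \textbf{Second Case} (either as ``the $j$'' of that iteration, or as one of the $j_b$'s with $b=a$), or it was never moved at all and has been at $i^*$ since $\sigma_{int}$. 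So the clients in $P_a$ currently at $i^*$ that were moved at least once are precisely those brought in by some previous \textbf{First Case} or \textbf{Second Case} step.

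The key observation I would isolate is a monotonicity fact about $\ell'_{i^*}$: this value only ever increases, and it increases by exactly one each time the \textbf{Second Case} fires. In the \textbf{Second Case} we enforce $|\sigma^{-1}(i^*) \cap P_a| = t \cdot \ell'_{i^*}$ at the \emph{start} of the iteration (this is the ``Observe'' line: the First Case did not apply), and then we add $j$ to $i^*$ and bump $\ell'_{i^*}$. So after a \textbf{Second Case} iteration in which the active color is $a$, center $i^*$ has $t \cdot \ell'_{i^*,\text{new}} - (t-1) = t\cdot\ell'_{i^*,\text{old}}+1$ clients of color $a$ — in particular it is \emph{not} saturated in color $a$ (since $t \ge 2$). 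That gap of $t-1$ is exactly what lets subsequent \textbf{First Case} steps pour more $P_a$-clients into $i^*$ without ever touching them again. Conversely, a $P_a$-client is only ever \emph{removed} from $i^*$ in a \textbf{Second Case} step whose active color is some $a$ and for which $a \in A$ — but $a \in A$ means $|\sigma^{-1}(i^*)\cap P_a| = \ell'_{i^*}$ at that moment, i.e.\ $i^*$ is at its \emph{lower} bound in color $a$, which combined with $t\ge 2$ and the preceding sentence is impossible once any $P_a$-client has entered $i^*$ after a \textbf{Second Case}. I would formalize this as: once a \textbf{Second Case} with active color $a$ has occurred, $|\sigma^{-1}(i^*)\cap P_a|$ never again drops to $\ell'_{i^*}$, hence no $P_a$-client is ever ejected from $i^*$ afterwards.

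Putting these together: take any client $j' \in P_a$ assigned to $i^*$ at the start of the current (\textbf{Second Case}) iteration, and suppose for contradiction $j'$ was reassigned at some earlier iteration. At the moment $j'$ most recently landed at $i^*$, that was via a \textbf{First} or \textbf{Second Case} step, after which — by the monotonicity argument above — $j'$ is never ejected; so $j'$ has been continuously at $i^*$ since then. But if $j'$ arrived via a step that also made $\ell'_{i^*}$ too large relative to its color-$a$ count, or if $j'$ arrived and was later a candidate to be moved, we contradict that $j'$ is still here. The cleanest contradiction target is the one the authors are setting up: in the \emph{current} iteration, $|\sigma^{-1}(i^*) \cap P_a| = t\cdot \ell'_{i^*}$ (from the Observe line), and I would show this forces every $P_a$-client now at $i^*$ to have been placed there \emph{in the current run of consecutive First-Case assignments since the last Second-Case}, none of which ever moved them again, while anything older was placed by a Second Case and likewise never moved — so in all cases ``never assigned to a different center in a previous iteration'' holds for the ones we might want to move next, i.e.\ the statement of the claim.

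The main obstacle I anticipate is bookkeeping the interleaving of \textbf{First Case} and \textbf{Second Case} steps cleanly, since between two \textbf{Second Case} invocations there can be many \textbf{First Case} invocations that dump clients of various colors into $i^*$; I need an invariant strong enough to survive an arbitrary such block. I would phrase the invariant as: \emph{at all times, and for every color $b$ for which a \textbf{Second Case} with active color $b$ has already occurred, every client in $P_b$ currently assigned to $i^*$ has never been moved since its (first and only) assignment to $i^*$.} The base case is vacuous; \textbf{First Case} steps only add clients to $i^*$ (and to other centers) without moving anything already at $i^*$, so they preserve it; a \textbf{Second Case} step with active color $a$ moves out only clients in colors $b \in A$, and $b \in A \Rightarrow |\sigma^{-1}(i^*)\cap P_b| = \ell'_{i^*}$, which (as argued) can only hold for colors $b$ for which no prior \textbf{Second Case} with active color $b$ has happened — so no client protected by the invariant is ever ejected, and the newly-bumped color $a$ joins the protected set with its clients untouched-since-arrival. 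The claim then follows by applying the invariant with $b = a$ in the iteration at hand. Verifying the ``$b\in A \Rightarrow$ no prior Second Case in color $b$'' implication carefully (using $t\ge 2$ to separate $\ell'_{i^*}$ from $t\cdot\ell'_{i^*,\text{old}}+1$) is the one genuinely delicate point, everything else is routine.
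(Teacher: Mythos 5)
Your proposal has a genuine gap, and it comes from not using the one fact on which the paper's two-line proof hinges: the client $j \in S \cap P_a$ of the \emph{current} iteration was already in $S$ during every earlier iteration (clients only ever leave $S$, never re-enter it). Hence in any earlier iteration in which the Second Case fired, the First Case must have failed \emph{for this very $j$ and the center $i^*$}, which pins $|\sigma^{-1}(i^*) \cap P_a| = t\cdot \ell'_{i^*}$ at the start of that iteration; since $a \in A$ would instead require $|\sigma^{-1}(i^*)\cap P_a| = \ell'_{i^*} < t\cdot\ell'_{i^*}$, no $j_a$ was ever imported into $i^*$ from another center --- which is exactly the claim. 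Your invariant cannot substitute for this, because it only ``protects'' a color $b$ \emph{after} the first Second Case with active color $b$ has occurred. But the claim must already hold at that first occurrence: an earlier Second Case with a different active color $c$ could a priori have had $a \in A$ and moved some $j_a$ from $i_a \neq i^*$ into $i^*$, and at the start of the first Second Case with active color $a$ your invariant is vacuous for $a$ and rules nothing out. That is precisely the scenario the claim forbids, and your argument never addresses it.

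Two further problems. First, your justification of ``$b\in A \Rightarrow$ no prior Second Case with active color $b$'' rests on the gap $|\sigma^{-1}(i^*)\cap P_b| - \ell'_{i^*} = (t-1)\,\ell'_{i^*,\mathrm{old}}$ staying positive; this is zero when $\ell'_{i^*,\mathrm{old}}=0$, and even when positive it is eroded by one at every subsequent Second Case in which $b\notin A$ (since $\ell'_{i^*}$ increments while the count of $P_b$ at $i^*$ need not), so it can reach zero once $S\cap P_b=\emptyset$ and the First Case stops topping $i^*$ up. The only robust way to exclude $b\in A$ is again ``some client of $P_b$ is still in $S$, so the First Case's failure forces the count to equal $t\cdot\ell'_{i^*}$'' --- i.e., the paper's argument. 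Second, you repeatedly describe the $j_b$'s as being ``ejected from $i^*$''; in fact $i_b \neq i^*$ always (because $b\in A$ means $i^*$ sits at its lower bound for color $b$ and so fails the condition $|\sigma^{-1}(i_b)\cap P_b| > \ell'_{i_b}$), so clients are only ever moved \emph{into} $i^*$, never out of it. Consequently the part of your invariant asserting that clients are not moved after reaching $i^*$ is trivially true and is not what the claim asserts: the claim is about where a client was \emph{before} it reached $i^*$.
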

\begin{proof}
Suppose otherwise, that some $j' \in P_a$ was earlier assigned to some $i' \neq i^*$ and was reassigned to $i^*$ prior to the iteration being considered. In this earlier iteration the second case was being executed and we had $j \in S$ and $|\sigma^{-1}(i^*) \cap P_a| = \ell'_{i^*} < t \cdot \ell'_{i^*}$ (since $j'$ was moved to $i^*$ in this earlier iteration). This contradicts the fact that $j \in S$ in this earlier iteration and that $j$ could be added to $i^*$ (i.e. the first case could have been executed).
\end{proof}

Towards the refined analysis, if $\ell'_{i^*}$ was never increased then $\sigma(j) \neq \sigma_{int}(j)$ only for $j \in S$. Otherwise, say $\ell'_{i^*}$ was increased $\Gamma \geq 1$ times.

\begin{lemma}
$\Gamma \leq k$.
\end{lemma}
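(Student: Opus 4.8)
The plan is to charge each increase of $\ell'_{i^*}$ to a distinct center of $F$, and in fact to show that once the second case fires and increases $\ell'_{i^*}$ on account of some group $a$ (i.e. $j \in P_a$), the group $a$ can never again be the ``blocking'' group that triggers the second case. Since there are only $\ell$ groups this would already give $\Gamma \leq \ell$, but the stronger bound $\Gamma \leq k$ should come from a different accounting: I would instead track, across the run, the quantity $\sum_{i \in F}(\text{clients assigned to } i)$ versus $\sum_i \ell'_i$, or more directly bound $\ell'_{i^*}$ at termination by $|F \cap C| = k$. Concretely, I would first observe that $S$ only shrinks (clients are removed from $S$ and never added), so the total number of assignment steps is finite, and every increase of $\ell'_{i^*}$ is accompanied by permanently assigning one client of $S$ to $i^*$.

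The heart of the argument is Claim~\ref{claim:move}. I would use it as follows: consider the $\Gamma$ iterations in which $\ell'_{i^*}$ is increased, and let $a_1, a_2, \ldots, a_\Gamma$ be the groups of the clients $j \in S$ that are assigned to $i^*$ in these iterations, in chronological order. In the $r$-th such iteration we have $|\sigma^{-1}(i^*)\cap P_{a_r}| = t \cdot \ell'_{i^*}$ (its maximum permitted value before the increase), and by Claim~\ref{claim:move} none of those $t \cdot \ell'_{i^*}$ clients of $P_{a_r}$ at $i^*$ was ever moved off another center — so they were all assigned to $i^*$ already by $\sigma_{int}$, i.e. $|\sigma_{int}^{-1}(i^*)\cap P_{a_r}| \geq t \cdot \ell'_{i^*}$ where $\ell'_{i^*}$ here is its value at the $r$-th increase, which is at least $r$ (it started at $\min_a |\sigma^{-1}(i^*)\cap P_a| \geq 0$ and was incremented $r-1$ times before, actually $\ge$ its initial value plus $r-1$). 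Hence $|\sigma_{int}^{-1}(i^*)\cap P_{a_r}| \geq t\cdot r \geq t$. But the total number of clients assigned to $i^*$ by $\sigma_{int}$ is at most... and here I need a bound. This is where I'd invoke that $\sigma_{int}$ is a genuine integral assignment of the clients in the component $C$, so $|\sigma_{int}^{-1}(i^*)| \leq |P \cap C|$; combined with the lower bound this gives $\Gamma \cdot t \leq |P\cap C|$, which is the wrong order (too weak).

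So the correct route to $\Gamma \leq k$ must go through the centers $i_b$, not through $i^*$'s load. Here is the accounting I would actually carry out: whenever the second case increases $\ell'_{i^*}$ from (say) $v$ to $v+1$, consider the group $a$ of the triggering client $j$. Claim~\ref{claim:move} tells us the $t\cdot v$ clients of $P_a$ sitting at $i^*$ at that moment were put there by $\sigma_{int}$; moreover, since the first case did not apply, \emph{every} unassigned client of every group is blocked, and in particular every $j' \in P_a$ not at $i^*$ satisfies $|\sigma^{-1}(i')\cap P_a| = t\ell'_{i'}$ wherever it sits. I would argue that the clients of $P_a$ at $i^*$ strictly increased compared to the previous time the second case fired with a $P_a$-client (from $t(v')$ to $t v$ with $v' < v$), and that because of Claim~\ref{claim:move} these are all original $\sigma_{int}$-assignments, so the sequence of loads $|\sigma_{int}^{-1}(i^*)\cap P_a|$ witnessed is strictly increasing in multiples of $t$; combined with it being $\le |\sigma_{int}^{-1}(i^*)|$. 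To finally get $k$, I expect the intended argument is: each increase of $\ell'_{i^*}$ forces, for the triggering color $a$, that center $i^*$ now holds \emph{all} clients of $P_a$ that $\sigma_{int}$ placed anywhere — i.e. after this step $\sum_{i} |\sigma^{-1}(i)\cap P_a|$ restricted off $i^*$ shrinks — and each such event "uses up" a center, because the center $i_b$ we pulled a client from for color $b=a$ would have been emptied of its surplus. I would therefore associate to the $r$-th increase a center that transitions to having no surplus in some color and never regains it, yielding at most $k$ increases.

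\textbf{Main obstacle.} The delicate part is pinning down exactly which monotone quantity is bounded by $k$ — i.e. finding the right potential. Claim~\ref{claim:move} is clearly the key lever (it guarantees that clients of the triggering color at $i^*$ are ``frozen'' originals of $\sigma_{int}$), but turning ``frozen originals'' into a counting bound of $k$ rather than $\ell$ or $|P\cap C|$ requires identifying that consecutive increases of $\ell'_{i^*}$ must be triggered by \emph{distinct} colors (giving $\ell$) \emph{and} that the relevant resource is really centers. I suspect the cleanest statement is: if the second case is executed for color $a$ in iteration $r$ and again for color $a$ in a later iteration $r'$, then between them $i^*$ acquired $t$ more clients of $P_a$, all of which (by Claim~\ref{claim:move}) were already at $i^*$ under $\sigma_{int}$ — contradiction once we exhaust $|\sigma_{int}^{-1}(i^*)\cap P_a|$. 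Thus each color triggers at most once, and a finer look shows the triggering colors are exactly those with $|\sigma^{-1}(i^*)\cap P_a|$ pinned at $t\ell'_{i^*}$, which correlates with distinct centers $i_b$ losing their last surplus; making that last correspondence precise is the real work, and I would spend most of the proof there.
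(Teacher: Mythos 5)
Your proposal does not reach a proof: after correctly identifying Claim~\ref{claim:move} as the key lever, you try three different counting schemes and none of them closes. The decisive missing ingredient is \emph{where the factor $k$ comes from}: it is the bound $|S \cap P_a| \leq t\cdot k$, which holds because each of the $k$ centers contributes at most $\Delta_{i,a}\le t$ excess clients of color $a$ to $S$. The paper's argument is then a single count at the \emph{last} increase of $\ell'_{i^*}$: at that moment $i^*$ holds exactly $t\cdot(\lfloor\ell_{i^*}\rfloor+\Gamma-1)$ clients of the triggering color $a$; by Claim~\ref{claim:move} each of these was either placed at $i^*$ by $\sigma_{int}$ and never unassigned (at most $t\cdot\lfloor\ell_{i^*}\rfloor$ of these, by the unassignment rule) or was drawn from $S\cap P_a$ (strictly fewer than $t\cdot k$ of these, since the current client $j\in S\cap P_a$ is still unassigned at that point). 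Subtracting gives $t(\Gamma-1)<t\cdot k$, hence $\Gamma\le k$. None of your three routes performs this count, and you explicitly concede that the ``real work'' of your final sketch remains to be done.

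Moreover, several of your intermediate steps are wrong. You infer from Claim~\ref{claim:move} that the $t\cdot\ell'_{i^*}$ clients of the triggering color at $i^*$ ``were all assigned to $i^*$ already by $\sigma_{int}$''; this does not follow --- the claim only excludes clients \emph{reassigned from a different center}, and clients taken from $S$ and assigned to $i^*$ (via the first case, or as the triggering client of the second case) are neither reassigned from another center nor part of $\sigma_{int}$'s surviving load at $i^*$. Counting exactly those $S$-clients is what produces the factor $k$. Your sub-claim that each color can trigger the second case at most once is unsupported (a color can trigger repeatedly as long as enough of its $S$-clients keep accumulating at $i^*$), and the final charging scheme --- attributing each increase to a center $i_b$ that ``loses its surplus'' for the triggering color --- cannot work as stated: in an iteration triggered by color $a$, the algorithm pulls clients only of colors $b\in A$, i.e.\ those pinned at the \emph{lower} bound $\ell'_{i^*}$, whereas color $a$ is pinned at the upper bound $t\cdot\ell'_{i^*}$, so no client of color $a$ is pulled from any $i_b$ in that iteration.
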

We remark this would prove the number of clients $j_b$ that had $\sigma(j_b)$ change over all iterations of the second case would is at most $k \cdot \ell$, as $|A| \leq \ell$ in each iteration.
So the total number of points $j$ with $\sigma(j) \neq \sigma_{int}(j)$ would then be at most $|S| + k \cdot \ell $ and the final solution cost would then be
\[ cost(\sigma_{int}) + O(k \cdot D) \cdot (|S| + k \cdot \ell) = O(OPT) + O(k^2 \cdot \ell \cdot t \cdot D) = O(k^2 \cdot \ell \cdot t \cdot OPT).\]
That is, the proof of Theorem \ref{thm:algopfc} will be complete.
\begin{proof}
Consider the moment that $\ell'_{i^*}$ was increased the last time (i.e. becomes equal to $\lfloor \ell_{i^*} \rfloor + \Gamma$). At this point, the algorithm was considering some $j \in S$ with, say, $j \in P_a$ to assign to $i^*$.
By Claim \ref{claim:move}, every client $j' \in P_a$ currently assigned to $i^*$ at the start of this iteration was either initially assigned to $i^*$ (i.e. had $\sigma_{int}(j') = i^*$ and was not initially unassigned) or was initially in $S$.
By how $S$ was constructed, it initially contained at most $t \cdot k$ clients in $P_a$ so at this point strictly fewer than $t \cdot k$ clients have been added to $i^*$ (recalling $j \in S \cap P_a$ has not yet been added to $i^*$ at this point).



At the start of the algorithm, we had $|\sigma^{-1}(i^*) \cap P_a| \leq t \cdot \lfloor \ell_i \rfloor$ (using the initial assignment $\sigma$ of $P\setminus S$) and the moment before $\ell'_{i^*}$ is increased to $\Gamma$ we have $|\sigma^{-1}(i^*) \cap P_a| = t \cdot (\lfloor \ell_i \rfloor + \Gamma - 1)$.
So at least $t \cdot (\Gamma-1)$ clients in $S \cap P_a$ have been added to $i^*$ since the start of the algorithm. By the previous paragraph, this also at most $t \cdot k - 1$, i.e. $t \cdot (\Gamma-1) < t \cdot k$.
So $\Gamma-1 < k$, meaning $\Gamma \leq k$ as both values are integers.
\end{proof}

\noindent
The entire algorithm is summarized in Algorithm \ref{alg:one}.
 \begin{algorithm}
 \caption{\pfc approximation.}\label{alg:one}
 \begin{algorithmic}[1]
 \State Use any $O(1)$-approximation for $k$-median to get a set of centers $F$.
 \State $\textsc{CAND} \gets \emptyset$ \Comment{The various feasible solutions found by the algorithm below}.
 \For{each $D \in \{d(j,i) : j \in P, i \in F\}$}
 \State if $LP(D)$ is infeasible, continue to the next iteration of the loop
 \State Let $x^*$ be an optimal solution to $LP(D)$
 \State Let $\ell_i = \min_{a \in [\ell]} \sum_{j \in P_a} x^*_{j,i}$ for each $i \in F$
 \For{each connected component $C$ of $G(D, x^*)$}
 \State Compute a minimum-cost assignment  $\sigma_{int} : P \cap C \rightarrow F \cap C$ satisfying $|\sigma^{-1}_{int}(i) \cap P_a| \in [\lfloor \ell_i \rfloor, \lfloor t \cdot \ell_i \rfloor]$ for each $i \in F \cap C$
 \State Execute the fixing procedure described in Section \ref{sec:fixing} to get a feasible \pfc solution $\sigma$ for this component $C$
 \EndFor
 \State Let $\sigma'$ be the assignment obtained by combining all assignments $\sigma$ found for the components of $G(D, x^*)$.
 \State $\textsc{CAND} \gets \textsc{CAND} \cup \{\sigma'\}$
 \EndFor
 \State \Return the cheapest solution in $\textsc{CAND}$
 \end{algorithmic}
 \end{algorithm}

\section{Hardness of Approximation Results}
\label{sec:hardness}
In this section, we provide our hardness of approximation results for \pfkmd. 
\subsection{With Disjoint Groups}
We show that \pfkmd with disjoint groups is almost as hard as \textsc{Soft Uniform Capacitated $k$-Median} (\textsc{CkM}). 
\begin{definition}[\textsc{Soft Uniform Capacitated $k$-Median}]
We are given metric space $(X, d)$ and parameters $k$ and $u$. Then, the goal is to open $k$ facilities $f_1,\dots, f_k\in X$ and assign each client (point) $x\in X$ to one of the facilities $f_{a(x)}$, via an assignment function $a: X \rightarrow \{f_1, \cdots, f_k\}$, so that at most $u$ clients are assigned to each facility. The objective is to minimize $\sum_{x\in X} d(x, f_{a(x)})$. 
\end{definition}
There is a more general version of this problem, where the sets of client and facility locations may be different. However, Li~\cite{Li14} proved that both variants of the problem admit the same approximation. Further, one can consider a hard variant of \textsc{CkM}, in which all opened facilities must be at different locations. Li~\cite{Li14} showed that if one variant admits an $\alpha$ approximation then the other admits an $O(\alpha)$ approximation (see~\cite{Li14} for details).

\begin{theorem}\label{thm:hardness-disjoint}
Assume that there is a polynomial-time $\alpha$ approximation algorithm for \pfkmd with $\ell = 2$ disjoint groups. Then \textsc{Soft CkM} admits an $(1+ \varepsilon)\alpha$ approximation for every constant $\varepsilon > 0$.
\end{theorem}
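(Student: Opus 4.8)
The plan is a polynomial-time reduction from \textsc{Soft CkM} to \pfkmd with $\ell = 2$ that is approximation-preserving up to a $(1+\varepsilon)$ factor. Given an instance $(X,d,k,u)$ of \textsc{Soft CkM}, build a \pfkmd instance by keeping $X$ as the first group $P_1$, adding $k$ fresh ``gadget'' points $g_1,\dots,g_k$ as the second group $P_2$, placing every gadget at a common distance $M$ from every point of $X$ and from every other gadget, keeping the number of clusters equal to $k$, and setting $t := u$. These distances form a metric whenever $M \ge \mathrm{diam}(X)/2$; for the argument below I take $M \ge \mathrm{diam}(X)$.

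The combinatorial core is that with two groups every nonempty cluster must contain a point of each group: a cluster with a $P_1$-point but no $P_2$-point violates $|C\cap P_1| \le t\,|C\cap P_2| = 0$, and symmetrically. Hence at most $k$ clusters are nonempty, so when a fair clustering uses all $k$ clusters each receives exactly one gadget and then $|C\cap P_1| \le t\cdot 1 = u$ --- exactly the capacity constraint. In the forward direction, a \textsc{Soft CkM} solution with facilities $f_1,\dots,f_k$, which we may assume each serve at least one client, induces a fair clustering of cost $OPT_{\textsc{CkM}} + kM$ by placing the clients of $f_i$ together with $g_i$ in cluster $i$ and centering it at $f_i$; hence $OPT_{\mathrm{fair}} \le OPT_{\textsc{CkM}} + kM$.

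For the reverse direction, run the assumed $\alpha$-approximation to get a fair clustering of cost at most $\alpha(OPT_{\textsc{CkM}} + kM)$. Using $M \ge \mathrm{diam}(X)$, re-center every cluster that is currently centered at a gadget at one of its own $P_1$-points; this does not increase the cost, since each $P_1$-point of the cluster moves at most $\mathrm{diam}(X) \le M$ closer while each of the cluster's gadgets moves from distance $0$ or $M$ to distance $M$. Now the total gadget cost is exactly $kM$. Deleting the gadgets, and for each cluster that holds $g$ gadgets opening $g$ facilities all co-located at that cluster's (now $X$-valued) center --- permitted because capacities are soft --- and distributing the cluster's $P_1$-points among them (at most $u$ per facility, since the cluster had at most $ug$ of them), yields a valid \textsc{Soft CkM} solution using exactly $k$ facilities, of cost at most $\alpha(OPT_{\textsc{CkM}} + kM) - kM = \alpha\,OPT_{\textsc{CkM}} + (\alpha-1)\,kM$.

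The main obstacle --- and the only point where the difference between $(1+\varepsilon)\alpha$ and an $O(1)$-factor loss is decided --- is the additive term $(\alpha-1)kM = \Omega(k\cdot\mathrm{diam}(X))$, which is unavoidable in this construction because $M$ must be of order $\mathrm{diam}(X)$. It is absorbed into $\varepsilon\alpha\,OPT_{\textsc{CkM}}$ only when $OPT_{\textsc{CkM}} = \Omega\!\big(k\cdot\mathrm{diam}(X)/\varepsilon\big)$, which need not hold for an arbitrary instance. I would therefore run the reduction not on the given instance but after a preliminary polynomial-time transformation, losing only another $(1+\varepsilon)$ factor, that brings an arbitrary \textsc{Soft CkM} instance into this ``large-optimum'' regime --- using the known $O(\log k)$-approximation to obtain and then geometrically refine a $(1+\varepsilon)$-accurate estimate of $OPT_{\textsc{CkM}}$, rescaling, and raising the optimum relative to the diameter via controlled auxiliary structure (possibly first amplifying the capacity by replacing each client with $s$ co-located copies and $u$ by $su$, which multiplies the optimum by exactly $s$ since the transportation polytope with integer capacities is integral). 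I expect this normalization to be where the real work lies: it must be genuinely cost-preserving, since any auxiliary clients added to raise $OPT_{\textsc{CkM}}$ also incur the $(1+\varepsilon)\alpha$ error and so must be introduced in a form whose contribution can be subtracted off exactly when the solution is mapped back. The gadget reduction itself, by contrast, is routine.
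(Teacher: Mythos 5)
Your gadget and both directions of the reduction match the paper's in all essentials (the paper uses a single far-away location holding $k$ red points rather than $k$ mutually distant gadget points, but that is cosmetic), and you have correctly located the crux: the additive $\Theta(k\cdot\mathrm{diam}(X))$ cost contributed by the second group is absorbed into $\varepsilon\alpha\cdot OPT$ only when $OPT$ is large relative to $k\cdot\mathrm{diam}(X)/\varepsilon$. The gap is that you stop exactly there. The ``preliminary normalization'' you defer is not a separate piece of work requiring an estimate of $OPT$, a geometric search over guesses, or auxiliary structure whose cost must later be subtracted off --- it is precisely the client-duplication you mention only parenthetically, folded directly into the reduction. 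Concretely, the paper first rescales and truncates so that the minimum positive distance is $1$ and $\mathrm{diam}(X)$ is polynomial in $n$ (so $OPT\ge 1$ in the nontrivial case), then places $W=\lceil k\cdot\mathrm{diam}(X)/\varepsilon\rceil$ black points at \emph{every} location of $X$ and sets $t=uW$. The duplicated instance has fair optimum at most $W\cdot OPT + k\cdot\mathrm{diam}(X)$, and any fair solution of cost $y$ maps back to a \textsc{CkM} solution of cost at most $y/W$; after dividing by $W$ the error term is $\alpha k\cdot\mathrm{diam}(X)/W\le \varepsilon\alpha\le\varepsilon\alpha\cdot OPT$. No estimate of $OPT$ and no extra $(1+\varepsilon)$ loss is needed, and your worry that added points ``incur the $(1+\varepsilon)\alpha$ error'' evaporates because duplication scales the optimum exactly.

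One step you should make explicit rather than assert is the map from a fair solution of the duplicated instance back to an \emph{integral} \textsc{CkM} assignment: dividing the black-point assignment by $W$ gives a fractional assignment sending one unit out of each $x\in X$ and at most $u$ units into each opened facility, of cost $y/W$, and integrality of the corresponding $s$--$t$ flow network (unit-capacity edges from the source, capacity-$u$ edges into the sink) yields an integral assignment of no greater cost. You already invoke essentially this fact to justify your capacity-amplification remark, so the tool is in hand --- the point is that it closes the reverse direction and, with the choice of $W$ above, the whole proof, without any of the additional machinery you anticipated.
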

\begin{proof}
Consider an instance $(X,d)$ with $k$ and $u$ of \textsc{CkM}. Using standard rescaling and truncating steps, we may assume that the smallest positive distance in $X$ is at least 1 and the diameter $\operatorname{diam}(X)$ of $X$ is at most polynomial in $n = |X|$.

We transform this instance to an instance $\cal I$ of \pfkmd. The set of locations in ${\cal I}$ is $X\cup \{R\}$, where $R$ is an extra point at distance  $\operatorname{diam}(X)$ from all other locations $x\in X$. There will be multiple clients/points at every location, as we will describe below. Instance $\cal I$ has two groups, black $G_b$ and red $G_r$. 
Define $W = \lceil k\cdot \operatorname{diam}(X)/\varepsilon\rceil$.  We put $W$ black points at every location $x\in X$. We put $k$ red points at $R$. The parameter $t$ equals $uW$. 

Now we relate the costs of the \textsc{CkM} and \pfkmd instances.
Consider a solution for \textsc{CkM} with facilities $(f_1,\dots, f_k)$ and assignment $a$. Denote its cost by $z$. Assume Wlog that at least one client is assigned to every opened facility.
We show that there is a corresponding solution for the \pfkmd instance of cost $y\leq Wz + k\cdot \operatorname{diam}(X)$. We simply assign every black point at location $x\in X$ to center $f_{a(x)}$. We assign red point number $i$ at $R$ to facility $f_i$. We open at most $k$ centers. We verify that the fairness constraint holds. Consider a location $v$ with at least one facility. Denote the opened facilities at this location by $f_{i_1}, \dots, f_{i_q}$. Then, we assign $q$ red points to $v$ and $\sum_{j=1}^q W |\{x:a(x) = i_j\}| \leq q W u$ black points to location $f$. The assignment cost is $Wz$ for black points and is at most $k\cdot \operatorname{diam}(X) $ for red points.

Now consider a solution for \pfkmd. We may assume that this solution does not assign any black points to $R$. If it does, we can reassign these black points and an appropriate number of red points to locations in $X$ without increasing the cost, since the assignment cost for black points may only go down (here we use that $d(R,x) = \operatorname{diam}(X)$ for all $x\in X$).

We treat red points assigned to locations in $X$ as facilities. That is, if red point number $i$ is assigned to $x$, we open facility $f_i$ at point $x$ in our solution for \textsc{CkM}. We use the assignment of black points to red points, to define an assignment flow from locations in $X$ to facilities $f_i$ such that
(i) $W$ units of flow leaves every $x\in X$ and (ii) at most $t = uW$ units of flow enters every facility $f_i$.
We divide all flow amounts by $W$ and get a rescaled flow $\Phi$ such that (i$'$) $1$ unit of flow leaves every $x$ and (ii$'$) at most $u$ units of flow enter every $f_i$. The cost of this flow with respect to edge costs $d(u,v)$ is at most $y/W$. Note that if $\Phi$ is integral, then it defines an assignment of points in $X$ to facilities in such a way that at most $u$ points are assigned to each facility. Further, the cost of the assignment is $z \leq y/W$, as required. So if $\Phi$ is integral, we are done. We now show that there is always an \textit{integral} flow $\Phi'$ satisfying properties (i$'$) and (ii$'$) and of cost at most that of $\Phi$. To this end, we construct an $s$-$t$ flow network. The network consists of $4$ layers. The first layer consists of a single vertex $s$, the second layer consists of $X$, the third one consists of $f_1,\dots, f_k$, and the fourth one of a single vertex $t$.
There is an edge of capacity 1 from source $s$ to each vertex $x$ in $X$; there is an edge of infinite capacity from each $x\in X$ to each $f_i$; there is an edge of capacity $u$ from each $f_i$ to $t$. Each edge from $x$ to $f_i$ has cost $d(x,f_i)$; all other edges are free.

Note that $\Phi$ defines a flow from $X$ to $\{f_1,\dots, f_k\}$, which can be uniquely extended to a feasible flow from $s$ to $t$. Its cost is $z\leq y/W$ and it saturates all edges from $s$. Since all the capacities are integral, there exists a feasible integral flow $\Phi'$ of cost at most $z \leq y/W$ that also saturates all edges from $s$. $\Phi'$ defines an integral assignment of points in $X$ to facilities $f_1,\dots, f_k$, as desired.

Now to solve an instance of \textsc{CkM}, we first reduce it to an instance of \pfkmd, run the provided $\alpha$-approximation algorithm, and then transform the obtained solution to a solution of \textsc{CkM}. Instance $\cal I$ of \pfkmd has a solution of cost at most 
$W \cdot OPT + k \cdot \operatorname{diam}(X)$, where $OPT$ is the optimal cost of \textsc{CkM}. Thus, the obtained solution for \textsc{CkM} has cost at most
\[\frac{\alpha(W \cdot OPT + k \cdot \operatorname{diam}(X))}{W} = \alpha\Bigl(OPT + \frac{k\cdot  \operatorname{diam}(X)}{W}\Bigr) \leq (1+\varepsilon)\alpha \cdot OPT.\]
\end{proof}

Note that since the above theorem holds for $\ell = 2$, and in this setting \pfkmd is a special case of \frc with $k$-median cost, the same hardness result holds for the latter problem too.

\subsection{With Overlapping Groups}
Now, we prove a very strong hardness result for the variant of \pfkmd with overlapping groups, where each point may belong to more than one group. We show that the problem does not admit an $n^{1-\varepsilon}$ approximation for every $\varepsilon > 0$ if $\mathrm{P}\neq \mathrm{NP}$, even when $k=2$. 
\begin{theorem}\label{thm:hardness-overlapping}
It is $\mathrm{NP}$-hard to approximate \pfkmd with overlapping groups within a factor of $n^{1-\varepsilon}$ for every constant $\varepsilon > 0$, even when $k=2$. 
\end{theorem}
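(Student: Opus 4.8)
The plan is to reduce from a classical $n^{1-\varepsilon}$-inapproximable problem — the natural candidate is \textsc{Maximum Independent Set} (or, essentially equivalently for this purpose, \textsc{Clique}), which is $\mathrm{NP}$-hard to approximate within $n^{1-\varepsilon}$ by Håstad/Zuckerman. Since $k=2$, a \pfkmd solution just splits the point set into two clusters, each of which must be pairwise fair with respect to the (now overlapping) groups. The idea is to engineer the metric and the group structure so that \emph{any} pairwise-fair $2$-clustering of bounded cost is forced to put one cluster at a single location corresponding to a large independent set of the input graph $G$, while infeasibility of the fairness constraints rules out all other partitions. The $n^{1-\varepsilon}$ gap then comes from placing the two ``legal'' center locations very far apart (distance roughly $n^{C}$ for large $C$), so that the cost is either small (when a genuine large independent set exists) or forced to be huge (when it does not), giving a multiplicative gap polynomial in the instance size.

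Concretely, I would take the vertex set $V(G) = \{v_1,\dots,v_m\}$ and create, for each vertex $v_i$, a point $p_i$ at its own location, plus one extra ``anchor'' point $q$ at a location far (distance $\Delta := m^{C}$) from all the $p_i$'s; the $p_i$ are mutually at distance $1$ (or we use the shortest-path metric of a suitable graph so that distances are consistent). The groups are indexed so that group membership encodes the edges of $G$: for each edge $\{v_i,v_j\} \in E(G)$ introduce a group $P_{ij}$ containing exactly $p_i$ and $p_j$, and add a few ``padding'' groups containing $q$ and/or all the $p_i$'s to control the balance. Setting $t$ appropriately (e.g. $t=1$, or a small constant), the pairwise-fairness constraint on a cluster $C$ then says: for every edge group $P_{ij}$, the cluster cannot contain exactly one of $p_i,p_j$ while some other group forces imbalance — I would tune the padding groups so that the binding constraint becomes exactly ``$C$ does not contain both endpoints of any edge,'' i.e. \emph{$C \cap \{p_1,\dots,p_m\}$ is an independent set of $G$}. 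Thus a cluster containing $q$ together with a set $I$ of the $p_i$'s is feasible iff $I$ is independent; the other cluster must then contain $V(G)\setminus I$, and I would add further groups so that this is feasible for the complement only when $|I|$ is at least some threshold (or, more simply, argue that the far anchor $q$ must be co-located with an independent set because otherwise fairness fails, and bound cost by the number of $p_i$'s that must travel distance $\Delta$ to reach the other center). The completeness direction is then immediate: a large independent set $I^*$ yields a clustering where only $m - |I^*|$ points pay the large distance, for cost $\approx (m-|I^*|)\Delta$; the soundness direction says that if $G$ has no independent set of size $\alpha m$ then every feasible clustering pays $\Delta$ for more than $(1-\alpha)m$ points, and the ratio of the two is $\ge n^{1-\varepsilon}$ once $C$ is chosen large enough relative to $\varepsilon$ (noting $n = \Theta(mW)$ for whatever multiplicity $W$ of padding points we use, which only changes constants in the exponent).

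The main obstacle I expect is designing the overlapping group structure so that the \emph{only} feasible bounded-cost clusterings are exactly those indexed by independent sets — in particular ensuring that the ``wrong'' clusterings (splitting the $p_i$'s in some way unrelated to independence, or moving $q$) are genuinely \emph{infeasible} rather than merely expensive, so that the gap is between ``cheap'' and ``infeasible/astronomically expensive'' and not a soft gap. This requires getting the interplay between the edge-groups $P_{ij}$, the padding groups, and the parameter $t$ exactly right, and checking both that a true independent set gives a fully feasible pairwise-fair clustering and that no feasible clustering can cheaply ``cheat'' by, say, leaving $q$ alone in its own cluster (which the $\alpha_i$-type lower bound implicit in pairwise fairness — every group that contains $q$ and also contains some $p_i$ forces other points into $q$'s cluster — should forbid). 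Once the gadget is verified, plugging in the independent-set inapproximability and choosing $C = C(\varepsilon)$ large finishes the proof; the metric can be realized exactly as a shortest-path metric so no triangle-inequality violations arise.
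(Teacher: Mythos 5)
Your proposal has two genuine gaps, and they interact. First, the cost accounting does not produce an $n^{1-\varepsilon}$ gap. In your completeness case the cluster anchored at $q$ absorbs an independent set $I^*$ and the remaining $m-|I^*|$ vertex points pay $\Delta$, so the cost is $\approx (m-|I^*|)\Delta$; in your soundness case the cost is $\approx (1-\alpha)m\Delta$. The ratio $\frac{(1-\alpha)m}{m-|I^*|}$ is independent of $\Delta$, so choosing $C$ large ``relative to $\varepsilon$'' buys you nothing: $\Delta$ cancels. Worse, in the regime where \textsc{Independent Set} is $m^{1-\varepsilon}$-hard, the yes-instances only guarantee $|I^*|\approx m^{1-\varepsilon}$, so $m-|I^*|=\Theta(m)$ in \emph{both} cases and the ratio is $O(1)$. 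A large multiplicative gap for a minimization problem needs the yes-case cost to avoid the large distance entirely and the no-case to be \emph{forced} into it by infeasibility of every alternative --- i.e., a reduction from a decision problem with a feasibility dichotomy, not from an optimization problem with an approximation gap. Second, the gadget you defer as the ``main obstacle'' is working against the grain of the constraint. Pairwise fairness with a large $t$ degenerates to ``every non-empty cluster contains at least one point of every group,'' a \emph{covering} condition; you need the opposite, an upper bound of one endpoint per edge-group per cluster, i.e., an independence condition, and it is not clear this can be encoded with ratio constraints plus padding groups (with $t=1$ and two-element edge-groups you immediately entangle all edge-groups with each other and with the padding). You have not exhibited the gadget, and the burden of the proof lies exactly there.

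For comparison, the paper's proof embraces the covering semantics rather than fighting it: it reduces from $2$-coloring of $3$-uniform hypergraphs (an NP-hard decision problem), places one point per hypergraph vertex at a middle location $q$ and $N^{\rho}$ points at each of two far locations $p_1,p_2$, makes each hyperedge a group of its three vertex-points, and sets $t=n$ so that fairness means ``each non-empty cluster meets each group.'' A proper $2$-coloring yields a fair $2$-clustering of cost $N$ (only the $N$ vertex points move, each by distance $1$); if no proper $2$-coloring exists, every fair clustering has a single non-empty cluster and costs at least $2N^{\rho}$, since one of the huge co-located blocks must travel. The gap $2N^{\rho-1}\ge n^{1-\varepsilon}$ then comes from the multiplicity of the far points, not from the distance scale. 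If you want to salvage your reduction, the cleanest fix is to abandon \textsc{Independent Set} and restructure it along these lines.
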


\begin{proof}
To prove this hardness result, we will present a reduction from the problem of $2$-coloring a $3$-uniform hypergraph, which is known to be NP-hard~\cite{dinur2005hardness}.
In this problem, given a $3$-uniform hypergraph $H = (V, E)$ with $N = |V|$ vertices, the objective is to decide whether there is a coloring of $V$ in two colors such that no hyperedge in $E$ is monochromatic. 

Consider a $3$-uniform hypergraph $H = (V, E)$. We define an instance of \pfkmd with overlapping groups as follows:
\begin{itemize}
\item The instance has three locations $p_1, q, p_2$, equipped with the line distance: $d(p_1,q) = d(p_2,q) = 1$, and $d(p_1,p_2)=2$.
\item For every vertex $u\in V$, there is a corresponding point $u$ at location $q$, which we will identify with $u$.
\item There are $N^\rho$ points at each of the locations $p_1$ and $p_2$, where $\rho = 2/\varepsilon$.
\item For each hyperedge $e=(v_1,v_2,v_3)\in E$, there is a corresponding group $G_e=\{v_1,v_2,v_3\}$ in the instance. Additionally, there is a group $G_0$ consisting of all the vertices at locations $p_1$ and $p_2$.
\item Let $n = 2N^\rho + N$ be the total number of points. Define the fairness parameter $t$ as $t= n$.
\end{itemize}
Note that the distances between points at the same location are 0. If desired, we can make them strictly positive by making all of them equal to a sufficiently small $0<\delta \ll 1/N^\rho$. This change will not affect the proof below.

For the chosen value of $t$, the fairness constraint simply requires that every non-empty cluster must contain at least one point from each group.
Consider the following two possibilities.

\paragraph{Hypergraph $H$ is $2$-colorable.} Let $V_1$ and $V_2$ be a valid $2$-coloring of $V$. It defines a clustering of the points: $C_1$ consists of all the points at location $p_1$ and all the points (corresponding to the vertices) in $V_1$; similarly, $C_2$ consists of all the points at location $p_2$ and all the points (corresponding to the vertices) in $V_2$.
We open centers for $C_1$ and $C_2$ at locations $p_1$ and $p_2$, respectively. Since each hyperedge $e$ has at least one vertex from $V_1$ and one from $V_2$, both clusters $C_1$ and $C_2$ contain at least one representative from each group $G_e$. Clearly, $C_1$ and $C_2$ also contain representatives from $G_0$. Therefore, clustering $C_1, C_2$ satisfies the fairness constraint. Its cost is $N$, since each vertex at location $q$ contributes 1 and all other points contribute 0 to the cost.

\paragraph{Hypergraph $H$ is not $2$-colorable.} In this case, there is no clustering with two non-empty clusters, satisfying the fairness constraint. If there were such a clustering $C_1, C_2$ then its restriction to points in $V$ would define a valid 2-coloring of $H$.
Therefore, every fair clustering with $k=2$ clusters has only one non-empty cluster; that is, all points belong to the same cluster. It is immediate that the cost of the clustering is at least $2 N^\rho$.

Now, it is NP-hard to distinguish between the possibilities listed above, and accordingly between the cases when the clustering cost is at most $N$ and is at least $2N^\rho$. We conclude that the problem is NP-hard to approximate within $2N^{\rho-1}=\frac{2N^\rho}{N} \geq \frac{n/2}{n^{\varepsilon/2}} \geq n^{1-\varepsilon}$.   
\end{proof}

\section{Experiments}\label{sec:experiments}

Our empirical result builds on the fair representational clustering framework used by Bera et al.~\cite{bera2019fair}. We use their implementation of a \textsc{$k$-Median} approximation to get the initial (not necessarily fair) solution that selects the centers $F$. Then we continue with our new algorithm to find a \pfkmd solution. Of interest is the ``cost of fairness'', the amount paid by our algorithm beyond the original \textsc{$k$-Median} solution's cost.

We also use the same four data sets from the UCI database\footnote{\url{https://archive.ics.uci.edu/datasets}} that were used in \cite{bera2019fair}. However, we do not directly compare our final results with theirs since the notion of fairness we are considering differs from theirs. The main focus of our experiments is evaluating the cost of fairness in practice using our algorithm.

\subsection{Datasets and Setup}
To evaluate the performance of our proposed approach, we performed experiments on four datasets, each comprising both numerical and categorical attributes. The data sets and their relevant attributes are summarized below:

\begin{itemize}
    \item \textbf{bank:} This dataset includes one categorical variable (\textit{marital}, with three unique values) and three numerical variables (\textit{age}, \textit{balance}, and \textit{duration}).
    \item \textbf{adult\_race:} This dataset features one categorical variable (\textit{race}, with five unique values) and three numerical variables (\textit{age}, \textit{final-weight}, and \textit{education-num}).
    \item \textbf{creditcard\_education:} This dataset includes one categorical variable (\textit{education level}, with five unique values) and three numerical variables (\textit{LIMIT\_BAL}, \textit{AGE}, and \textit{BILL\_AMT1}).
    \item \textbf{census1990\_ss:} This dataset has one categorical variable (\textit{dAge}, with eight unique values) and five numerical variables (\textit{dAncstry1}, \textit{dAncstry2}, \textit{iAvail}, \textit{iCitizen}).
\end{itemize}

Each data set is processed to conform to the clustering requirements, ensuring a balanced representation of categorical values while optimizing the numerical attributes. 

\subsection{Experimental Methodology}
The experiments consist of multiple stages, as outlined below:

\begin{enumerate}
    \item \textbf{Balancing Factor $t$:} For each dataset, we run the experiment with fairness parameter $t$ being the minimum possible integer such that the set of points in the input is $t$-balanced. We believe this provides the most interesting information about the cost of fair clustering as it imposes the strictest requirement for the dataset that would still result in a feasible instance.

    \item \textbf{Vanilla \textsc{$k$-Median} Clustering:} Initially, a \textsc{$k$-Median} algorithm is used to identify cluster centers. We employ the single-swap local search based implementation in \cite{bera2019fair}, which provides a $5$-approximation for \textsc{$k$-Median} \cite{AryaGKMMP-SIAMJ04}. This algorithm also serves as the baseline for our evaluations. 
    The results from this step are visualized in the plots labeled as \textit{vanilla}.
    We also note that in all our experiments, the vanilla clustering algorithm had some centers not receiving points from any one of the groups.

    \item \textbf{Fairness Adjustment:} One of the bottlenecks in our algorithm's running time is solving $O(n^2)$ linear programs, one for each distance $D$ between points. To avoid solving so many LPs, we try a smaller number of values $D$. Namely, if $\delta$ denotes the minimum nonzero distance between points in the metric then we try all values of the form $\delta \cdot 1.1^j$ for $j \geq 0$ until this exceeds the maximum distance in the metric.
    This results in only a small number of linear programs being solved in each dataset. One can show that this would only worsen the theoretical worst-case approximation guarantee by a factor $1.1$, but we expect there would be minimal loss to the guarantee in practice. We use \cite{gurobi} in our implementation to solve the LPs.
    We also point out that this step could be further sped up in a parallel setting as different values of $D$ can be handled independently.

    \item \textbf{Finding best assignment using another integer program:}
    After our implementation of our \pfkmd algorithm completes, there is one final thing that could be done to possibly improve the final solution in practice. Namely, our algorithm assigns a particular number of clients from each group to each center $F$. We then use a post-processing step to find the minimum-cost assignment of clients to centers with these given clients-per-center values.
    Although this step does not improve the theoretical approximation guarantee, we observed that it consistently reduces the cost of the final solution in practice.        
\end{enumerate}
\begin{figure}[!t]
    \centering
    \captionsetup[subfigure]{labelformat=empty} 
    \parbox{\textwidth}{\centering (a) $k = 5$.\par}
    \begin{subfigure}{0.24\textwidth}
        \centering
        \includegraphics[width=\linewidth]{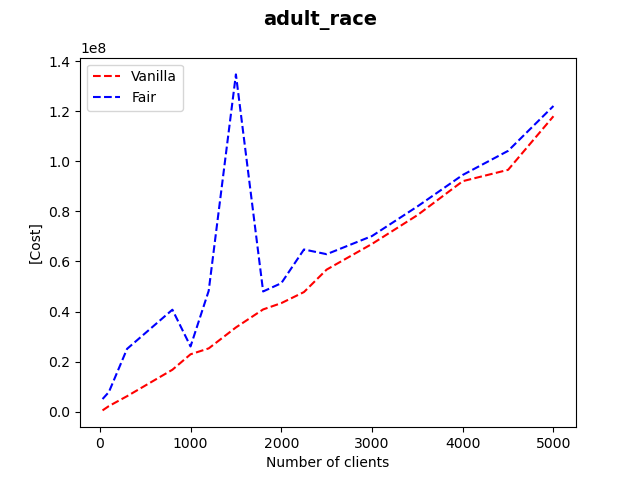}
    \end{subfigure}
    \begin{subfigure}{0.24\textwidth}
        \centering
        \includegraphics[width=\linewidth]{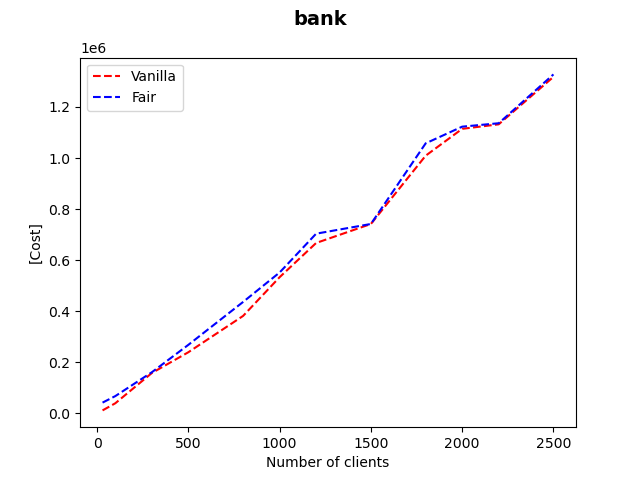}
    \end{subfigure}
    \begin{subfigure}{0.24\textwidth}
        \centering
        \includegraphics[width=\linewidth]{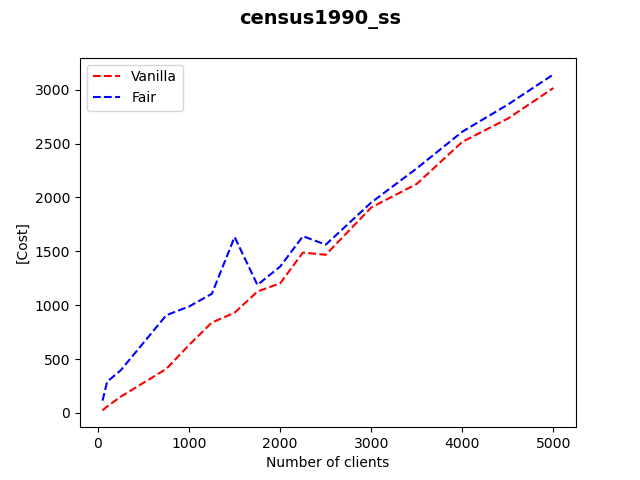}
    \end{subfigure}
    \begin{subfigure}{0.24\textwidth}
        \centering
        \includegraphics[width=\linewidth]{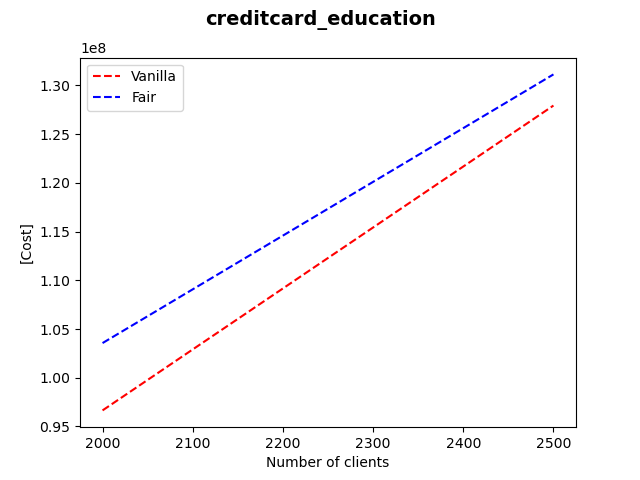}
    \end{subfigure}
    \parbox{\textwidth}{\centering (b) $k = 10$.\par}
    \begin{subfigure}{0.24\textwidth}
        \centering
        \includegraphics[width=\linewidth]{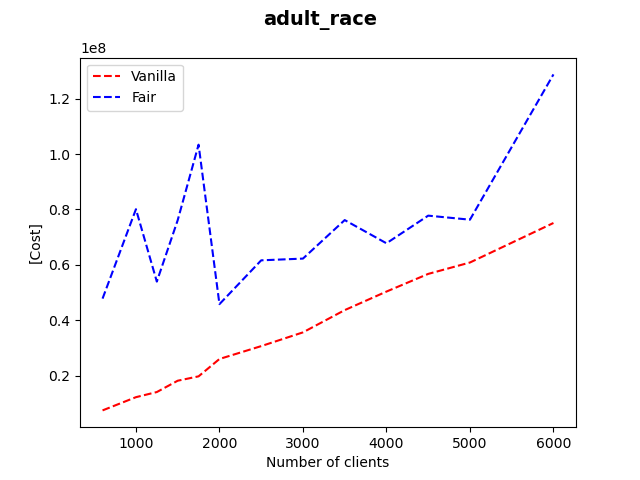}
    \end{subfigure}
    \begin{subfigure}{0.24\textwidth}
        \centering
        \includegraphics[width=\linewidth]{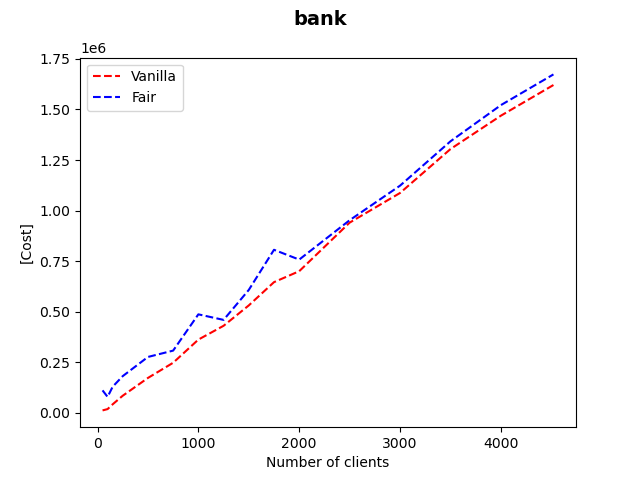}
    \end{subfigure}
    \begin{subfigure}{0.24\textwidth}
        \centering
        \includegraphics[width=\linewidth]{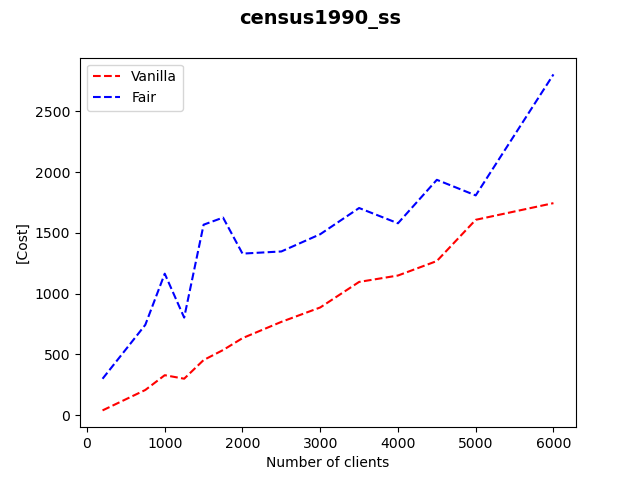}
    \end{subfigure}
    \begin{subfigure}{0.24\textwidth}
        \centering
        \includegraphics[width=\linewidth]{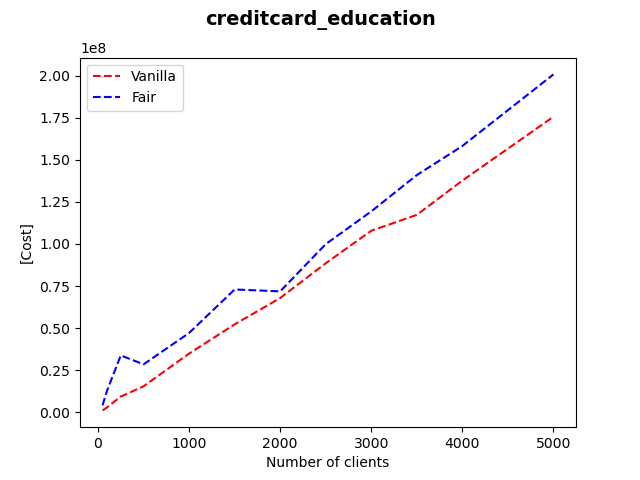}
    \end{subfigure}

    \parbox{\textwidth}{\centering (b) $k = 15$.\par}
    \begin{subfigure}{0.24\textwidth}
        \centering
        \includegraphics[width=\linewidth]{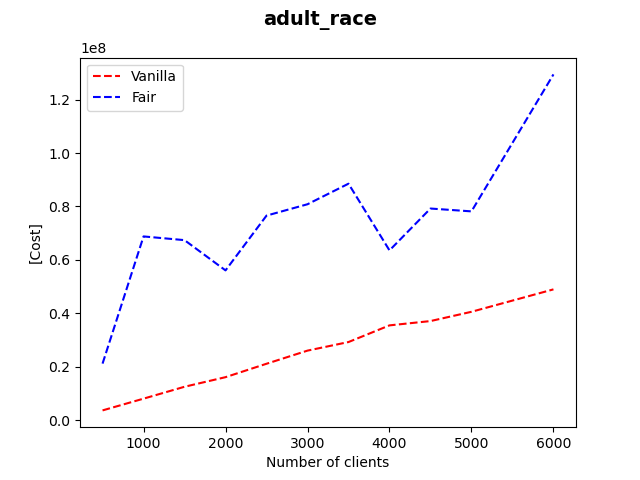}
    \end{subfigure}
    \begin{subfigure}{0.24\textwidth}
        \centering
        \includegraphics[width=\linewidth]{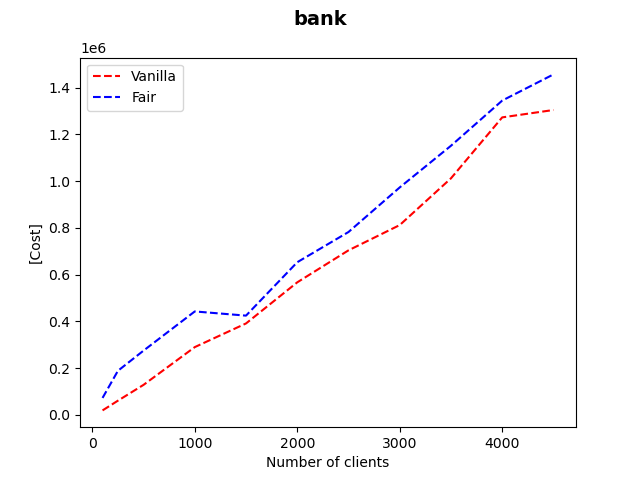}
    \end{subfigure}
    \begin{subfigure}{0.24\textwidth}
        \centering
        \includegraphics[width=\linewidth]{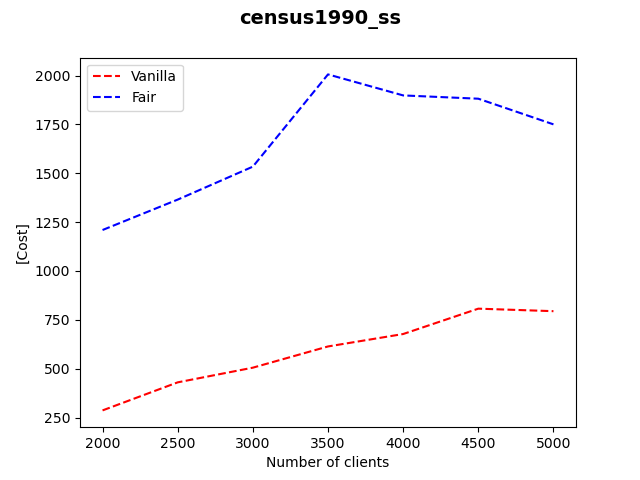}
    \end{subfigure}
    \begin{subfigure}{0.24\textwidth}
        \centering
        \includegraphics[width=\linewidth]{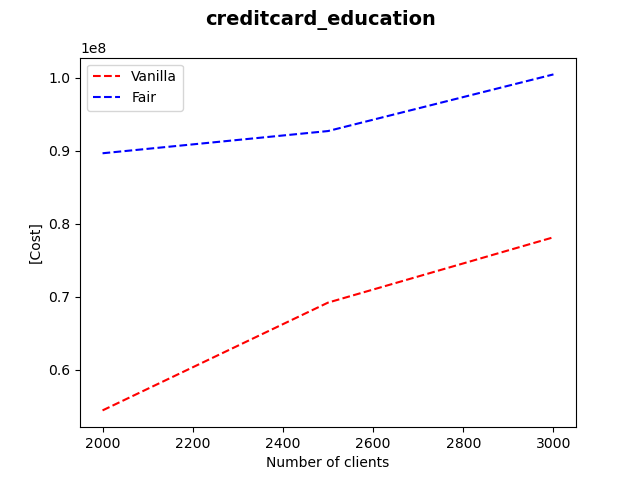}
    \end{subfigure}
    
    \parbox{\textwidth}{\centering (b) $k = 20$.\par}
    \begin{subfigure}{0.24\textwidth}
        \centering
        \includegraphics[width=\linewidth]{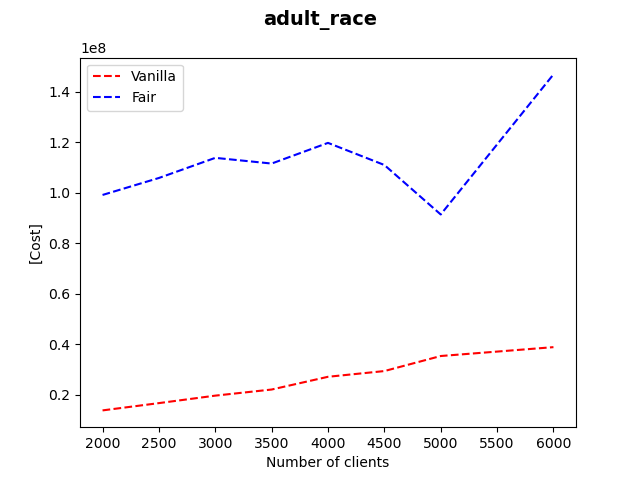}
    \end{subfigure}
    \begin{subfigure}{0.24\textwidth}
        \centering
        \includegraphics[width=\linewidth]{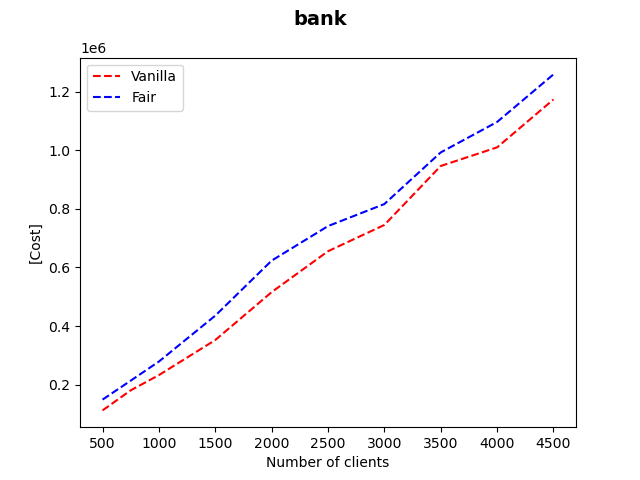}
    \end{subfigure}
    \begin{subfigure}{0.24\textwidth}
        \centering
        \includegraphics[width=\linewidth]{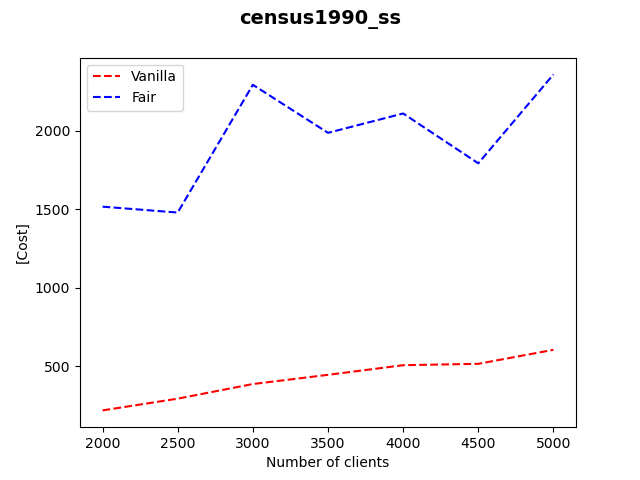}
    \end{subfigure}
    \begin{subfigure}{0.24\textwidth}
        \centering
        \includegraphics[width=\linewidth]{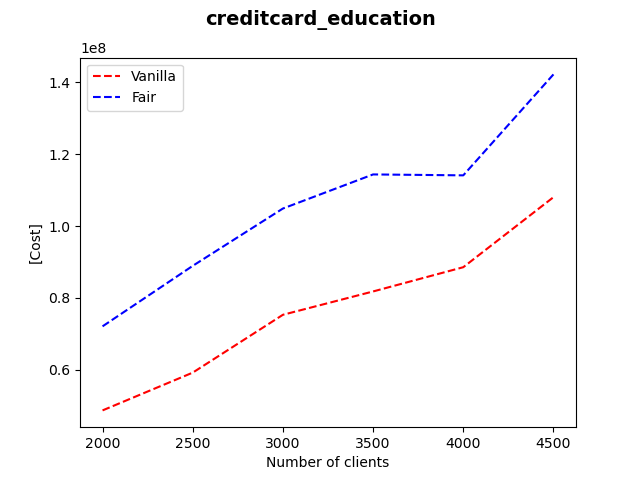}
    \end{subfigure}
    \caption{Cost comparison across multiple datasets with varying target numbers of clusters ($k$).}
    \label{fig:cost}
\end{figure}
\subsection{Results and Analysis}
We conducted experiments on datasets with varying cluster counts ($k$) to examine the impact of fairness adjustments on clustering costs. Figure~\ref{fig:cost} compares the results of vanilla and fair clustering for each dataset. Notably, the difference becomes more pronounced as the number of clusters ($k$) increases, which is expected since balancing cluster centers becomes more challenging for larger $k$. Nevertheless, the cost difference remains significantly better than the worst-case approximation guarantee stated in Theorem~\ref{thm:algopfc}.
\begin{figure}[!t]
    \centering
    \captionsetup[subfigure]{labelformat=empty} 
    \parbox{\textwidth}{\centering (a) $k = 5$.\par}
    \begin{subfigure}{0.24\textwidth}
        \centering
        \includegraphics[width=\linewidth]{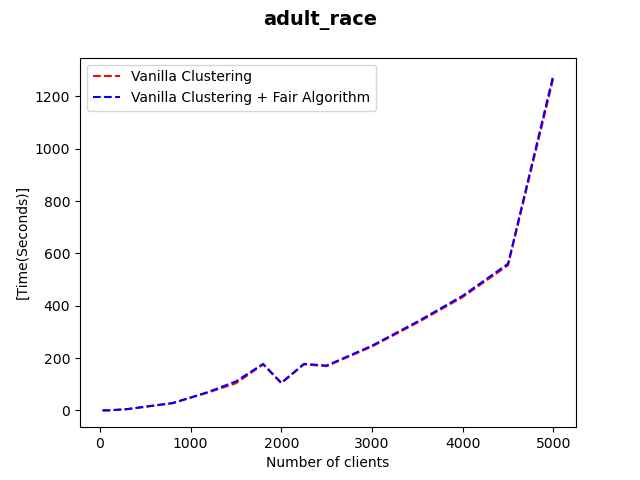}
    \end{subfigure}
    \begin{subfigure}{0.24\textwidth}
        \centering
        \includegraphics[width=\linewidth]{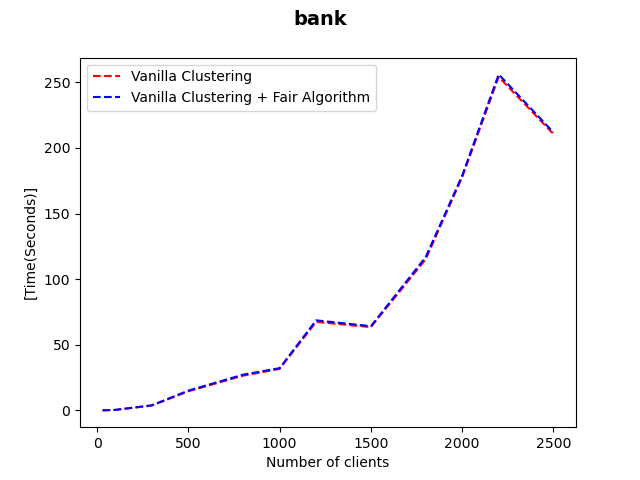}
    \end{subfigure}
    \begin{subfigure}{0.24\textwidth}
        \centering
        \includegraphics[width=\linewidth]{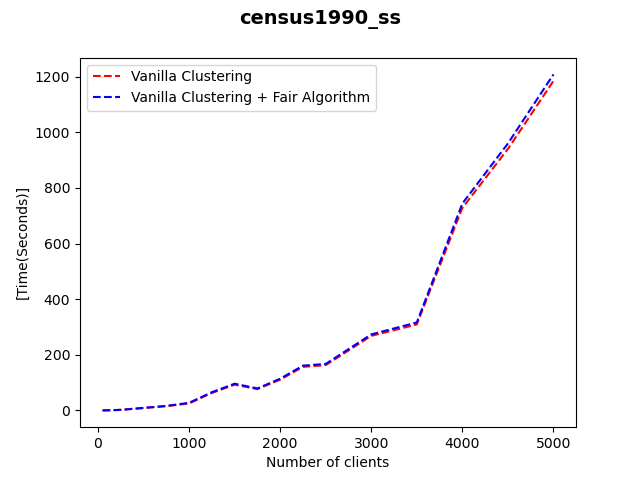}
    \end{subfigure}
    \begin{subfigure}{0.24\textwidth}
        \centering
        \includegraphics[width=\linewidth]{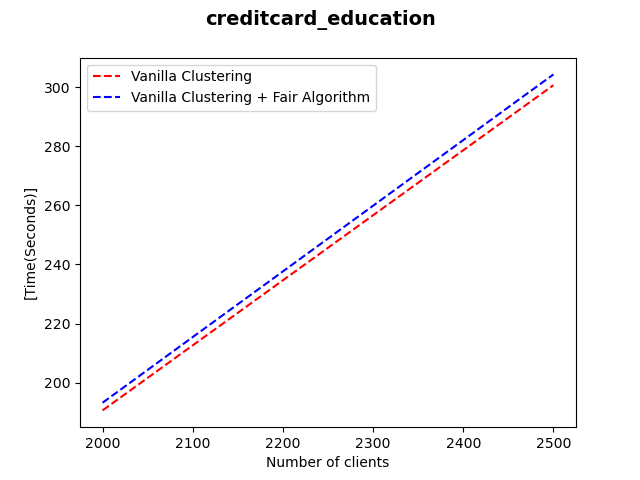}
    \end{subfigure}
    \parbox{\textwidth}{\centering (b) $k = 10$.\par}
    \begin{subfigure}{0.24\textwidth}
        \centering
        \includegraphics[width=\linewidth]{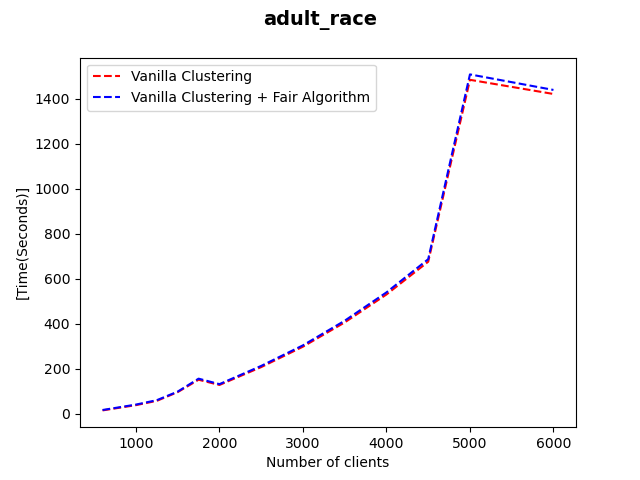}
    \end{subfigure}
    \begin{subfigure}{0.24\textwidth}
        \centering
        \includegraphics[width=\linewidth]{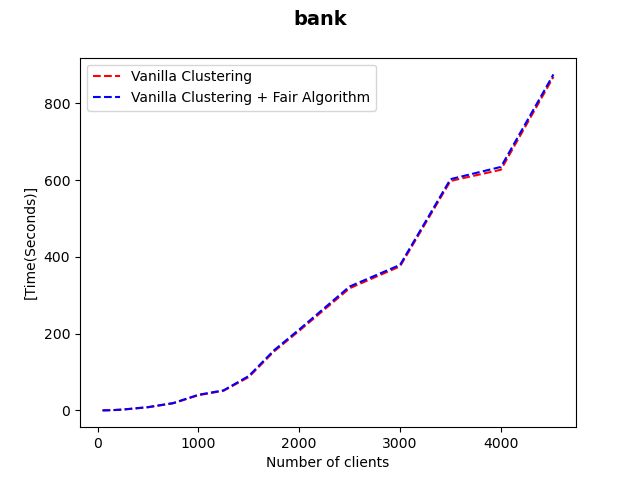}
    \end{subfigure}
    \begin{subfigure}{0.24\textwidth}
        \centering
        \includegraphics[width=\linewidth]{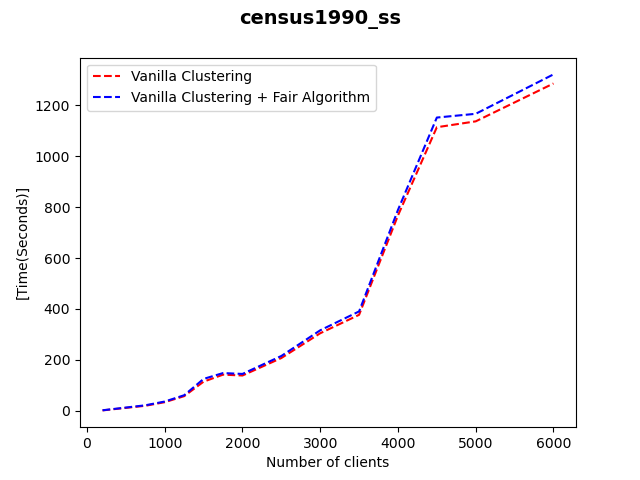}
    \end{subfigure}
    \begin{subfigure}{0.24\textwidth}
        \centering
        \includegraphics[width=\linewidth]{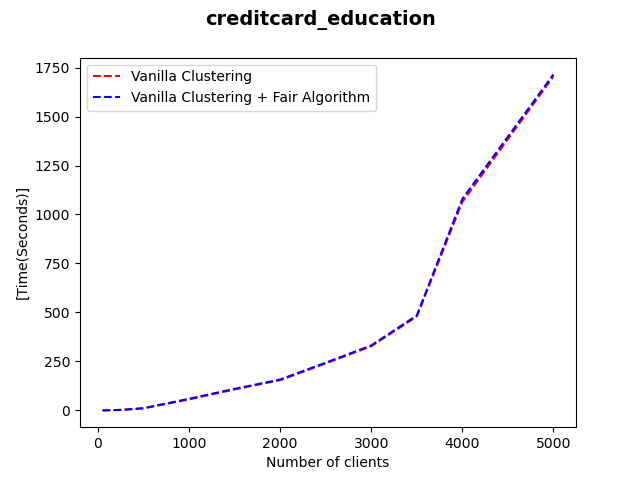}
    \end{subfigure}

    \parbox{\textwidth}{\centering (b) $k = 15$.\par}
    \begin{subfigure}{0.24\textwidth}
        \centering
        \includegraphics[width=\linewidth]{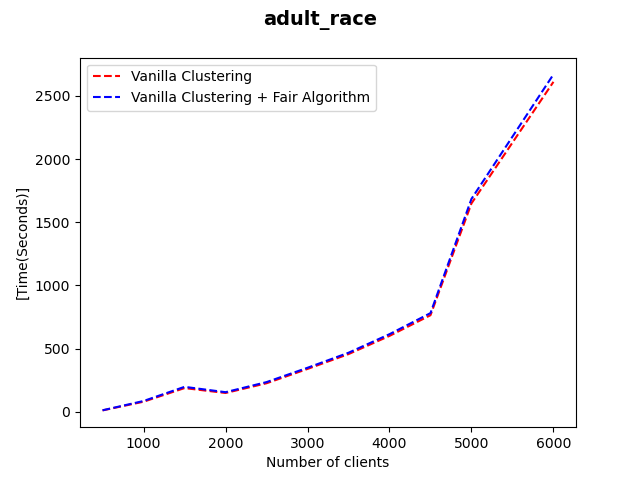}
    \end{subfigure}
    \begin{subfigure}{0.24\textwidth}
        \centering
        \includegraphics[width=\linewidth]{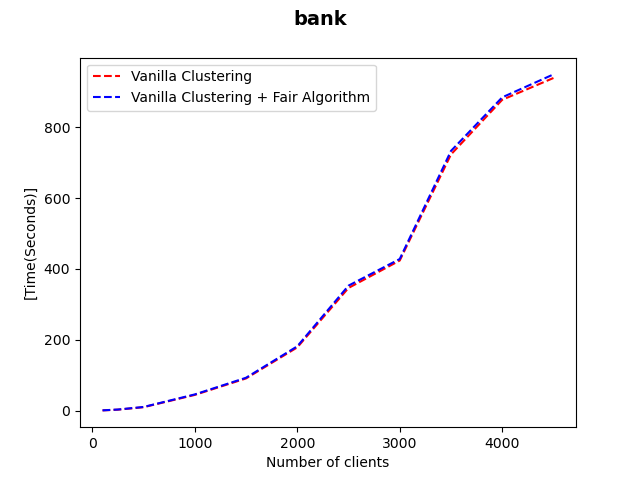}
    \end{subfigure}
    \begin{subfigure}{0.24\textwidth}
        \centering
        \includegraphics[width=\linewidth]{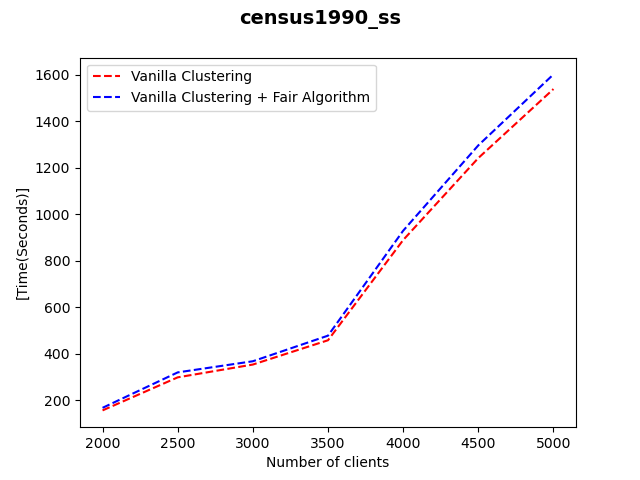}
    \end{subfigure}
    \begin{subfigure}{0.24\textwidth}
        \centering
        \includegraphics[width=\linewidth]{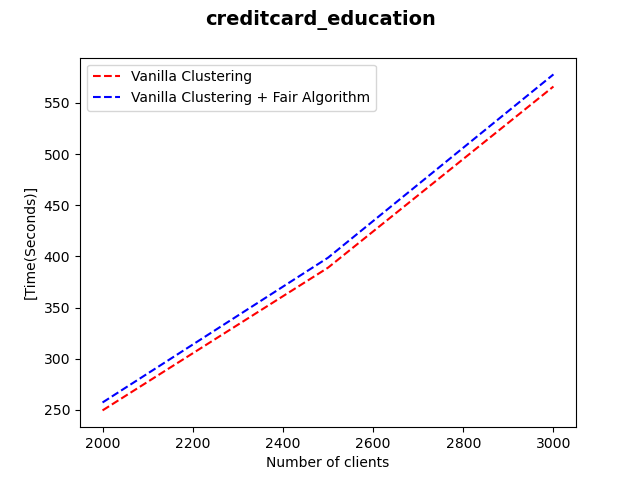}
    \end{subfigure}
    
    \parbox{\textwidth}{\centering (b) $k = 20$.\par}
    \begin{subfigure}{0.24\textwidth}
        \centering
        \includegraphics[width=\linewidth]{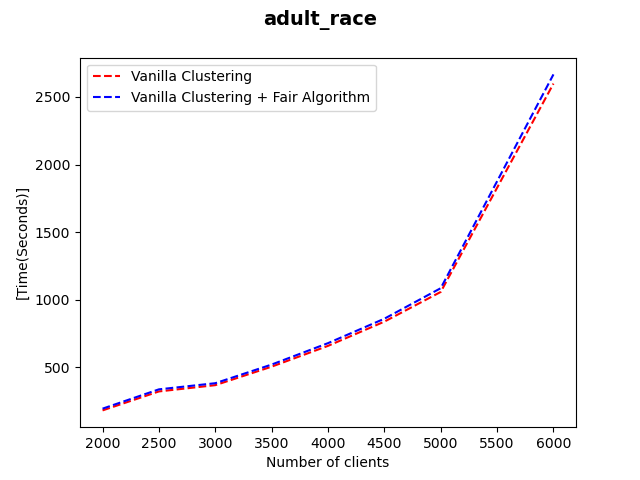}
    \end{subfigure}
    \begin{subfigure}{0.24\textwidth}
        \centering
        \includegraphics[width=\linewidth]{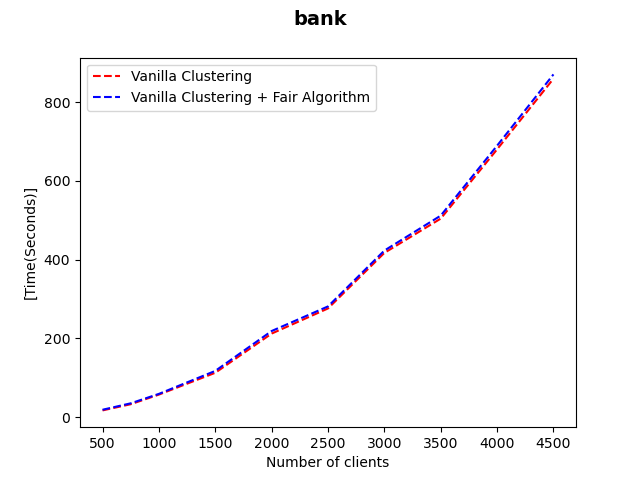}
    \end{subfigure}
    \begin{subfigure}{0.24\textwidth}
        \centering
        \includegraphics[width=\linewidth]{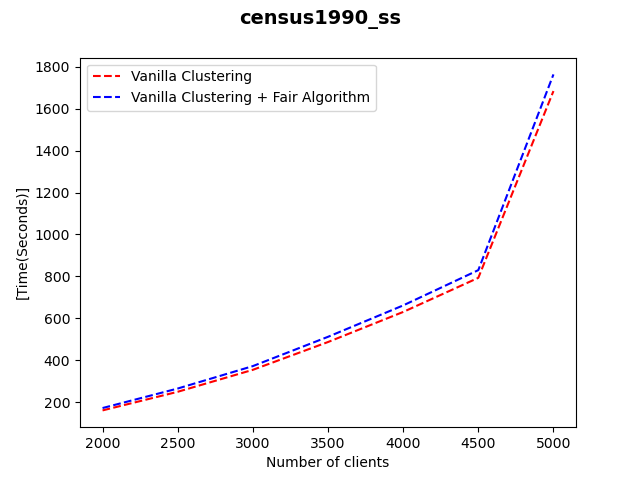}
    \end{subfigure}
    \begin{subfigure}{0.24\textwidth}
        \centering
        \includegraphics[width=\linewidth]{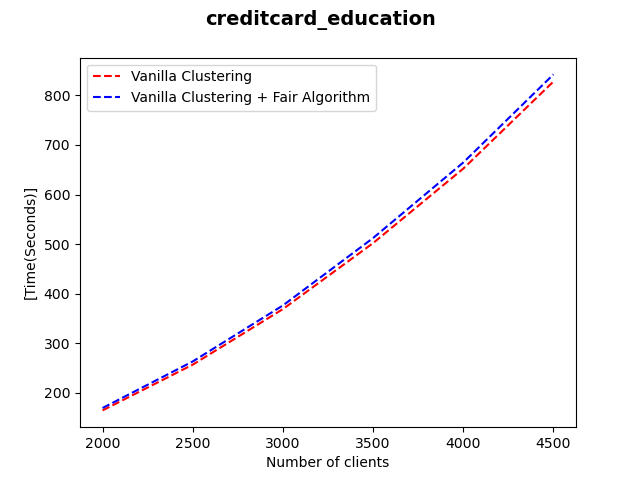}
    \end{subfigure}
    \caption{Runtime comparison across multiple datasets with varying numbers of clusters ($k$).}
    \label{fig:time}
\end{figure}

Plots of the running time of vanilla clustering versus our algorithm are presented in Figure~\ref{fig:time}. In all cases, it is clear that the
bottleneck is the vanilla clustering routine and the overall algorithm could be sped up by using a faster initial \textsc{$k$-Median}
approximation.

\bibliographystyle{plain}
\bibliography{main}
\end{document}